\newtheorem{theorem}{Theorem}
\newtheorem{remark}{Remark}
\newcommand{\whencolumns}[2]{ #2 }
\newcommand{\whencolumns}[2]{ #1 }
\algnewcommand{\algorithmicgoto}{\textbf{go to}}%
\algnewcommand{\Goto}[1]{\algorithmicgoto~\ref{#1}}%
\begin{document}

\title{Goodput Maximization with Quantized Feedback in the Finite Blocklength Regime for Quasi-Static Channels 
}

\author{Hasan~Basri~Celebi,~\IEEEmembership{Member,~IEEE,}
	and~Mikael~Skoglund,~\IEEEmembership{Fellow,~IEEE}
	\thanks{Hasan Basri Celebi was with the KTH Royal Institute of Technology, Stockholm, Sweden. He is now with ABB AB, Västerås, Sweden (e-mail: hasan-basri.celebi@se.abb.com). Mikael Skoglund is with the School of Electrical Engineering and Computer Science, KTH Royal Institute of Technology, Stockholm, Sweden (e-mail: skoglund@kth.se).}
	\thanks{This work was funded in part by the Swedish Foundation for Strategic Research (SSF) under grant agreement RIT15-0091.}
}

\maketitle

\begin{abstract}
	In this paper, we study a quantized feedback scheme to maximize the goodput of a finite blocklength communication scenario over a quasi-static fading channel. It is assumed that the receiver has perfect channel state information (CSI) and sends back the CSI to the transmitter over a resolution-limited error-free feedback channel. With this partial CSI, the transmitter is supposed to select the optimum transmission rate, such that it maximizes the overall goodput of the communication system. This problem has been studied for the asymptotic blocklength regime, however, no solution has so far been presented for finite blocklength. Here, we study this problem in two cases: with and without constraint on reliability. We first formulate the optimization problems and analytically solve them. Iterative algorithms that successfully exploit the system parameters for both cases are presented. It is shown that although the achievable maximum goodput decreases with shorter blocklengths and higher reliability requirements, significant improvement can be achieved even with coarsely quantized feedback schemes.  
	
\end{abstract}

\begin{IEEEkeywords}
	Channel coding, channel state information, goodput maximization, low-complexity receivers, quantized feedback, URLLC.
\end{IEEEkeywords}

\section{Introduction}\label{sec_intro}

The conventional mobile broadband communication systems are characterized to make sure that they provide high goodput and capacity to the end users. Although this is still expected in future generation communication systems, new requirements are also introduced for new applications. Three main usage scenarios are specified in the $ 5^{\text{th}} $ generation of mobile communication systems \cite{imt_requirements, celebi_wireless_comm}: (1) enhanced Mobile Broadband (eMBB), which aims at enabling communication with extremely high data rates across a wide coverage area, (2) massive Machine-Type Communication (mMTC), where the goal is to enable massive number of devices to communicate with the network infrastructure and between each other without any human interaction, and (3) Ultra-Reliable Low-Latency Communication (URLLC), which provides communication support for novel traffic types, such as mission-critical applications, with stringent constraints on reliability and end-to-end latency. 

With the recent major releases of 3GPP Radio Access Network working group, to support the URLLC applications, it is aimed at achieving reliability values not lower than \%$ 99.9999 $ with end-to-end latency not higher than $ 1 $ ms \cite{3gpp_release15,3gpp_release16,le_an_overview}. In order to meet such demands, it is required to increase the efficiency of the communication system and decrease both the latency and the overall number of packet errors. However, one major bottleneck in wireless communication is the presence of channel fading, which is caused by the multipath propagation and leads to undesired fluctuations in the received signal power, which may result in loss of the received packet. 

In this study, we consider finite blocklength communication over a frequency-flat quasi-static fading wireless channel where the channel is assumed to stay constant over some coherence time. One of the most important performance criteria over quasi-static fading channel is the overall goodput of the system, which can be measured by the achievable expected rate over a large number of packet transmissions with variable transmission rates. While studying the achievable rates over quasi-static fading channels, it is important to select the right performance limit of the system \cite{goldsmith_capacity}. For applications that are latency tolerant, i.e. no constraint on latency is introduced, it can be assumed that one packet can span over infinitely many independent fading occurences. In this case, the valid performance limit of the system can be computed with the ergodic capacity \cite{cover_elements_of}. On the other hand, for applications that are intolerant to latency, outage capacity is the valid performance limit \cite{biglieri_fading}, which is the maximal transmission rate such that the probability of the instantaneous mutual information falling below the selected transmission rate does not exceed a desired outage threshold. 

However, when stringent latency constraints are introduced, such as in URLLC systems, performance limits that are achieved in the infinite blocklength regime cannot be justified with finite blocklength \cite{ji_ultra_reliable}.  Therefore, significant amount of research on maximal achievable transmission rates for finite blocklength has been conducted in the recent years. It is shown in \cite{polyanskiy_channel_coding} that compared to the asymptotic limits, i.e. infinite blocklength regime, a rate penalty needs to be paid when transmitting in finite blocklength regime. Achievable bounds for various channels models with different fading environments have been presented in \cite{yang_quasi_static} and \cite{yang_beta_beta}.

While characterizing the achievable performance of the system over a fading channel, accurate knowledge of the channel-state-information (CSI) is required \cite{alouini_comparison}, which can be achieved by transmitting a separate training sequence that is already known to the receiver \cite{celebi_training}. However, CSI at the transmitter is also crucial for reliable communication since, with this information, resources and the transmission strategy can be adapted according to the channel state which can greatly improve the performance \cite{alouini_comparison}. In this study, it is assumed that perfect CSI is available at the receiver but only partial knowledge is available at the transmitter. This assumption is a realistic scenario for URLLC applications where CSI is crucial at the transmitter to maintain the high-reliability level however transmission of the full CSI cannot be afforded due to the latency requirements. Thus, based on this partial information, we develop an information-theoretic approach to investigate the achievable goodput of the system over a large number of blocks transmitted at variable rates.

\subsection{Related Work and Motivation}

Fixed-transmission rate communication systems with perfect CSI over slowly fading channels are studied in \cite{caire_optimum}. It is shown that the outage probability can be minimized by adapting a power control system when perfect CSI at the transmitter is available. A great amount of research has also studied the case where the transmitter has partial CSI information. Notably, the benefit of sending partial information over the feedback is to reduce the system overhead \cite{yang_noma}, compared to the case that each receiver feeds all channel information back to the transmitter. Minimization of outage probability with fixed transmission rate, when partial CSI at the transmitter is available, is studied in \cite{bhashyam_feedback}. A more systematic feedback scheme is studied in \cite{etemadi_joint} where the transmitter can only achieve quantized CSI information. In \cite{kim_on_the}, performance of a wireless communication system with partial CSI information at the transmitter, which is transmitted over a error-free quantized feedback channel, is considered in asymptotic regime. An adaptive feedback scheme is proposed that maximizes the expected rate. 

Furthermore, a one-bit feedback scheme, where the user compares the received signal strength to a threshold and sends back a binary signal informing the transmitter that if the signal power is above or below the threshold, has been studied in \cite{xu_on_the}. Partial CSI information at the transmitter for orthogonal and non-orthogonal multiple access communication systems is studied in \cite{yang_on_the} and the sufficient conditions are presented. 

On the other hand, a well-studied sequential feedback system, named as hybrid automatic repeat request, also considers a communication system with partial feedback. The performance of hybrid automatic repeat request protocols in infinite blocklength regime is studied in \cite{tuninetti_on_the}. Performance comparison between the communication systems with quantized feedback and hybrid automatic repeat request protocol is studied in \cite{makki_on_hybrid}, where a hybrid system is proposed. Recently, performance of communication systems with hybrid automatic repeat request protocol in finite blocklength regime is studied in \cite{makki_fast}. 

\subsection{Our Work and Contributions}

In this work, we propose an adaptive quantized feedback scheme in the finite blocklength regime that exploits the maximum goodput by searching the optimum selections of quantization regions with corresponding transmission rates.  We first formulate the goodput maximization problem and then analytically solve it.  An iterative algorithm that achieves the optimal feedback scheme is also proposed. A close approximation on the optimal selection of the transmission rate on each quantization region is introduced. 

Next, we study the second optimization problem being the quantized feedback scheme design to maximize the goodput with reliability constraint. In that case we first study the allocation of the maximum error allowance for each quantization region. There, we propose a waterfilling-like algorithm to assign the maximum error probability to each quantization region. We then analytically solve the optimization problem with the augmented Lagrangian method. An iterative algorithm that can achieve the maximum goodput while guaranteeing the reliability constraint is proposed. The feasibility conditions are also presented. Notably, to broaden the generality of the study, the feedback design problem is explicitly formulated, where no specific channel distribution is taken into account. Thus, results are valid for any continuous channel distribution.

\section{System Model}\label{sec_system_model}

We consider the discrete time complex baseband single-input single-output wireless communication system illustrated in Fig. \ref{fig_system_model}. A codeword of blocklength $ n $ is transmitted over a quasi-static fading channel, where the complex valued channel coefficient $ h $ is an independent and identically distributed (i.i.d.) random variable according to some distribution but remains constant over the codeword transmission. The received signal can be expressed as
\begin{equation}\label{eq_system_model}
\boldsymbol{y} = h\boldsymbol{x} + \boldsymbol{z},
\end{equation}
where $ \boldsymbol{x} $ and $ \boldsymbol{z} $ represent the $ n $-length transmitted codeword and  complex Gaussian noise vector where the samples are i.i.d and $ z_i \sim CN(0,1)$, where $z_i$ represents the $i$th component of $ \boldsymbol{z} $. Let $ \gamma = |h|^2 $ describe the i.i.d channel power. Notably, $ \gamma $ is a continuous random variable with its corresponding probability density function (pdf), denoted as $ f(\gamma) $, and cumulative distribution function (cdf), $ F(\gamma) $. It is assumed that $ f(\gamma) $ is positive over $ 0\leq\gamma\leq\infty $ and both $ f(\gamma) $ and $ F(\gamma) $ are continuous. 

It is assumed that the channel coefficient is known perfectly at the receiver. We consider that the receiver divides the positive real line into $ \Phi $ number of quantization regions and applies a deterministic index mapping on the channel power, given as  
\begin{equation}\label{key}
M(\gamma) = i, 
\end{equation}
such that $ \gamma \in [\phi_i, \phi_{i+1}) $, $ i=1,2,\cdots,\Phi $, where $ \phi_i $ and $ \phi_{i+1} $ represent the boundary of the $ i^{\text{th}} $ quantization region. The selected index, $ i $, is then transmitted back to the transmitter over the error-free quantized feedback channel. Therefore, CSI is partially known to the transmitter. 

Next, a transmission rate $ r_{i} $ is selected according to $ i $. In this paper, we study the optimum feedback scheme that maximizes of the goodput of the communication system, which is the overall correctly received information rate (in bits-per-second). For instance, for a fixed-rate communication system, i.e. $ \Phi = 1 $, with error probability $ \epsilon $ the goodput of the communication system can be computed as 
\begin{equation}\label{key}
r(1-\epsilon)
\end{equation}
where $ r $ is the transmission rate and $ (1-\epsilon) $ represents the success rate of the communication system. Thus, the goal of the study is to investigate the optimum selections of $ r_{i} $'s and $ \phi_i $'s such that the long-term goodput of the communication system is maximized. 


A power budget constraint is allocated on the power of the codeword such that
\begin{equation}\label{eq_power_budget_constraint}
\lVert \boldsymbol{x} \rVert^2 \leq P.
\end{equation}
It is clear that equality in Eq. \eqref{eq_power_budget_constraint} yields the maximum goodput since there is no cost on transmission rate is incurred with increasing the power up to the upper limit.

Next, previously obtained results under the assumption of infinite blocklength are first introduced and then our contributions are presented.

\begin{figure}[t]
	\centering
	\whencolumns{
		\includegraphics[width=.7\linewidth]{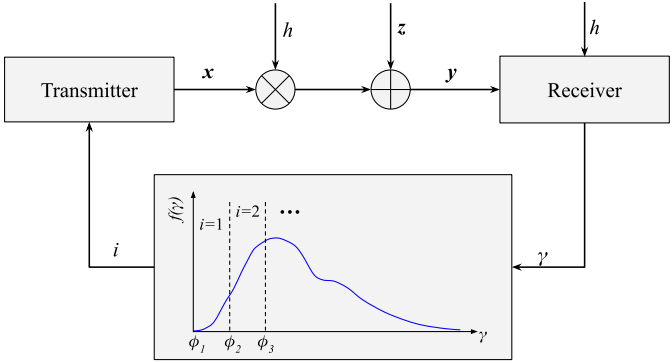}
	}{
		\includegraphics[width=1\linewidth]{system_model.eps}
	}
	\caption{System model}
	\label{fig_system_model}
\end{figure}

\section{Infinite Blocklength Analysis}

It is shown by Shannon in \cite{shannon_a_mathematical} that there exists a codebook describing the collection of codewords, with size $ 2^{nC}$ codewords, such that the codeword error probability (CEP) vanishes as $ n \rightarrow \infty $, where CEP is the probability that the receiver decides in favor of a codeword that is different from the one actually sent and $ C $ is the channel capacity and represents the maximum transmission rate for which error-free communication can be achieved. 

The goodput of the communication system can be maximized by increasing the transmission rate up to the channel capacity and the channel capacity can be maximized by selecting the maximum allowed power $ P $. Thus, for a given channel gain realization $ \gamma $, the instantaneous channel capacity can be computed as
\begin{equation}\label{eq_shannon_cap}
C(\gamma) = \log\left( 1+P\gamma \right) .
\end{equation}
When perfect CSI is available at the transmitter, it is possible to match the transmission rate according to the instantaneous channel capacity, i.e. $ C(\gamma) $. In this case, the maximum achievable goodput is the ergodic capacity of the channel which is
\begin{equation}\label{eq_inf_block_erg_cap}
\int_{0}^{\infty} f(\gamma) C(\gamma) \mathrm{d}\gamma .
\end{equation}
On the other hand, when no CSI is present at the transmitter, the maximum achievable goodput can be found by solving the following optimization problem
\begin{equation}\label{eq_inf_block_fixed_rate_cap}
\underset{r\geq 0}{\text{maximize}} ~~ \int_{\frac{1}{P}(2^r -1)}^{\infty} rf(\gamma) \mathrm{d}\gamma .
\end{equation}
Notice that Eq. \eqref{eq_inf_block_erg_cap} and Eq. \eqref{eq_inf_block_fixed_rate_cap} provide the best and the worst performance bounds, i.e. perfect and no CSI at the transmitter, to the quantized feedback scheme.

Let the transmitter select a capacity-achieving codebook according to the feedback index $ i $, such that\footnote{All logarithms in this paper are with base 2.}
\begin{equation}\label{key}
r_i = \log\left( 1+P\gamma_i \right) ,
\end{equation}
where $ \gamma_i \in [\phi_i, \phi_{i+1}] $ represents the selected reconstruction point of the $ i $th quantization region. Now, suppose that a given channel realization $ \gamma $ belongs to the $ i $th quantization region, i.e. $ \gamma \in [\phi_i, \phi_{i+1}) $. We know that as long as $ r_i \leq C(\gamma) $, error-free transmission is possible. Thus, successful decoding is only possible when $ \gamma \geq \gamma_i $. If $ \gamma < \gamma_i $, the system is in outage. Once $ \gamma_i $, $ \forall i \in \{1,2,\cdots,\Phi\} $, are selected, the outage probability of the system can be computed as
\begin{equation}\label{key}
\sum_{i=1}^{\Phi} F(\gamma_i) - F(\phi_i) .
\end{equation} 
Notice that selecting $ \gamma_i = \phi_i $ yields outage-free transmission. However, goodput maximization is not guaranteed with this selection. 

We, next, formulate the optimization problem on designing the feedback scheme that maximizes the goodput in the infinite blocklength regime. The optimal feedback scheme is the solution of the following optimization problem
\begin{subequations}
	\begin{align}\label{eq_opt_prob_infinite_block}
	\underset{\phi_i, \gamma_i, r_i}{\text{maximize}} ~&~ \sum_{i=1}^{\Phi} r_i \big(F(\phi_{i+1}) - F(\gamma_i)\big) 
	\\
	\text{subject to} ~&~ \phi_i \leq \gamma_i ,
	\\
	&~ \gamma_i \leq \phi_{i+1} .
	\end{align}
\end{subequations}
This optimization problem has been solved in \cite{kim_on_the} and \cite{makki_on_hybrid}. It is shown that the optimum can be achieved by setting
\begin{equation}\label{key}
r_i = \log\left( 1+P\phi_i \right) 
\end{equation}
and the reconstruction points are found by an iterative algorithm that can solve the following equation
\whencolumns{
	\begin{equation}\label{eq_quant_region_iterative_solution}
	F(\phi_{i+1}) = F(\phi_i) + \frac{1}{P} f(\phi_i) (1+\phi_i P) \log\left( \frac{1+\phi_i P}{1+\phi_{i-1} P} \right), ~~~ \text{for}~ i = 1,2,\cdots,\Phi,
	\end{equation}
}{
	\begin{multline}\label{eq_quant_region_iterative_solution}
	F(\phi_{i+1}) = F(\phi_i) + \frac{1}{P} f(\phi_i) (1+\phi_i P) \log\left( \frac{1+\phi_i P}{1+\phi_{i-1} P} \right),
	\\
	\text{for}~ i = 1,2,\cdots,\Phi,
	\end{multline}
}
where the first quantization region does not necessarily start from zero and therefore a new variable $ \phi_{0} = 0 $ is introduced. Thus, the overall outage probability of the optimum feedback scheme is $ F(\phi_1)$ since data transmission with zero outage probability occurs in all quantization regions except the region between $(\phi_0,\phi_1)$. When the channel power coincides within this region, the communication system is in outage and data is lost. Notably, the equation in \eqref{eq_quant_region_iterative_solution} can be achieved by applying the Karish-Kuhn-Tucker (KKT) conditions to the optimization problem. An iterative algorithm is proposed in \cite{kim_on_the} to find the optimum reconstruction points. 

\section{Finite Blocklength Analysis}

Now assume a blocklength constrained communication system where $ n $ is not allowed to take arbitrarily large values. In information theoretic studies, this is called the finite blocklength regime and, in this regime, $ C(\gamma) $ is not a correct measure to calculate the maximal allowed transmission rate. In this case, the achievable transmission rate with codewords of length $ n $ with non-zero CEP $ \epsilon $ can be closely approximated by the normal approximation, which can be computed as
\begin{equation}\label{eq_normal_approximation}
R(n,\gamma,\epsilon)=C(\gamma)-\sqrt{\frac{V(\gamma)}{n}} Q^{-1}(\epsilon)+O\left(\frac{\log n}{n}\right).
\end{equation}
where the quantity $ V(\gamma) $ is called the channel dispersion and, for AWGN channel, it is obtained as
\begin{equation}\label{eq_channel_capacity}
V(\gamma) = \left( 1 - \frac{1}{(1+\gamma P)^2}\right) (\log \exp (1))^2 ,
\end{equation} 
and $ Q^{-1}(\cdot) $ is the inverse of the Gaussian $ Q- $function
\begin{equation}\label{key}
Q(t) = \frac{1}{\sqrt{2\pi}} \int_{t}^{\infty} \exp \left(-\frac{u^2}{2}\right) \mathrm{d}u .
\end{equation}
The expression in \eqref{eq_normal_approximation} is termed as the normal approximation to the maximal transmission rate in the finite blocklength regime. Notice that $ R(n,\gamma,\epsilon) $ increases  unboundedly as $ \epsilon \rightarrow 1 $. However, with strict CEP requirements, i.e., small $ \epsilon $, the second term in Eq. \eqref{eq_normal_approximation} introduces a back-off from $ C(\gamma) $ to ensure the transmission achieves the reliability requirements with $ n $ blocklength. Thus, the achievable rate decreases substantially. On the other hand, the achievable rate increases monotonically as the blocklength $ n $ increases and converges to Shannon's capacity formula in the case of asymptotically long blocklength, i.e., $ n \rightarrow \infty $. 

Although Eq. \eqref{eq_normal_approximation} gives a close approximation on the limit of achievable rates, it does not tell how to achieve it \cite{celebi_latency_and, celebi_a_multi}. In the recent years, several encoding and decoding schemes have been proposed that can perform close to Eq. \eqref{eq_normal_approximation}. Their performance comparisons are depicted in \cite{celebi_wireless_comm, liva_code_design, shivarnimoghaddam_short_block}. In this paper, it is assumed that the transmitter selects finite-length codewords from a codebook that can achieve the achievable rates given by Eq. \eqref{eq_normal_approximation}.   

For a given transmission rate $ r $, the probability of decoding errors can then be found by solving Eq. \eqref{eq_normal_approximation} for $ \epsilon $
\begin{equation}\label{eq_cep_bound}
\epsilon = \Omega (\gamma , r ) ,
\end{equation}
where 
\begin{equation}\label{key}
\Omega (\gamma , r ) = Q \left( \big(C(\gamma)-r\big) \sqrt{\frac{n}{V(\gamma)}} \right)
\end{equation}
The behavior of $ \Omega (\gamma , r ) $ is depicted in Fig. \ref{fig_CEP_bound} for several blocklength values. As expected, in the asymptotic regime, i.e., $ n \rightarrow \infty $, error-free transmission can be achieved when $ r\leq C $. On the other hand, if $ r>C $, no information can be sent. Unlike the asymptotic regime, in the finite blocklength regime, information transmission becomes possible even for transmission rates higher than the capacity, albeit at the expense of high CEP values. On the other hand, even for rates lower than the capacity, CEP may not vanish. This result shows that the optimal feedback scheme in the asymptotic regime may not correspond to the optimal selections for finite blocklength.   

\begin{figure}[t]
	\centering
	\whencolumns{
		\includegraphics[width=.6\linewidth]{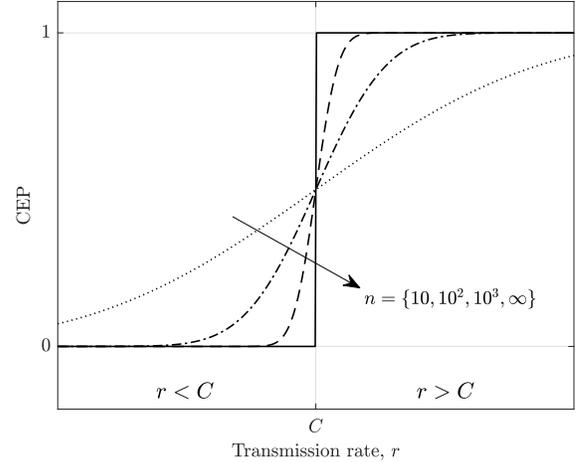}
	}{
		\includegraphics[width=1\linewidth]{cep_vs_r.eps}
	}
	\caption{Differences between the CEP characteristics in infinite and finite blocklength regimes.}
	\label{fig_CEP_bound}
\end{figure}

\subsection{Finite Blocklength Analysis in Quasi-Static Channels}

The instantaneous capacity of a channel, given a channel gain $ \gamma $, can be calculated as in Eq. \eqref{eq_shannon_cap}. Assuming that a fixed transmission rate $ r $ is selected, there is a certain probability that the transmission is in outage and reliable communcation is not possible. The outage probability of a communication system can be calculated as
\begin{equation}\label{key}
\epsilon^{\text{out}} = \mathbb{P}\{\log(1+P \gamma ) < r\} ,
\end{equation}
where the probability is over the channel gain $ \gamma $. Thus, the outage capacity or the $ \epsilon $-capacity of a communication system can be found as the largest $ r $ that fulfills $ \mathbb{P}\{\log(1+\gamma\mathcal{P}) < r\} \leq \epsilon $:
\begin{equation}\label{eq_outage_cap_inf}
C^{\epsilon^{\text{out}}} = \sup \{ r\geq 0: ~ \mathbb{P}\{\log(1+ P \gamma) < r\} \leq \epsilon\}
\end{equation}
Notice that the outage capacity $ C^{\epsilon^{\text{out}}} $ reveals the maximum transmission rate such that the CEP of the communication system is $ \epsilon $, assuming $ n \rightarrow \infty $.

The performance of finite blocklength codes in quasi-static fading channels is analyzed in \cite{yang_block_fading} and \cite{yang_quasi_static}, and it is shown that the channel dispersion $ V(\gamma) $ is zero for many classes of channel models such that the achievable rate can be closely approximated by the outage capacity 
\begin{equation}\label{eq_achiev_rate_out}
R\left(n,\gamma,\epsilon^{\text{out}}\right) = C^{\epsilon^{\text{out}}} + O\left(\frac{\log n}{n}\right) .
\end{equation}
Notice that since the dispersion is zero, no additional term on $ V(\gamma)/n $ appears in Eq. \eqref{eq_achiev_rate_out}. This shows that $ R(n,\gamma,\epsilon) $ in Eq. \eqref{eq_achiev_rate_out} converges to the outage capacity $ C^{\epsilon^{\text{out}}} $ faster than $ R(n,\gamma,\epsilon) $ in Eq. \eqref{eq_normal_approximation} converges to the capacity $ C(\gamma) $. Thus, one can conclude that the blocklength in the finite blocklength regime under quasi-static fading channels has smaller impact on the achievable rate.

However, this result does not hold for all types of channels. Consider, for example, a Rician fading channel with $h \sim \mathcal{CN}(\mu, \sigma^2)$. In case the line-of-sight component $\mu$ remains constant and the variance $\sigma^2$ of the non-LoS component converges to zero, the channel approaches to an AWGN channel, where the maximum achievable rate is given by Eq. \eqref{eq_normal_approximation}. In that case, the convergence to the capacity occurs on the order of $1/\sqrt{n}$, not $(\log n)/n$. Therefore, the expression in \eqref{eq_achiev_rate_out} does not hold for such a channel. 

To extend the normal approximation in \eqref{eq_normal_approximation} for different types of quasi-static fading channels, the following expression is proposed in \cite{yang_quasi_static}
\begin{equation}\label{eq_general_cep_formula}
\epsilon = \mathbb{E}\big[ \Omega (\gamma , r )  \big],
\end{equation}
where the expectation is taken with respect to the distribution of the random channel gain $ \gamma $. Hence, the overall success rate of the communication system can be calculated as
\begin{equation}\label{eq_overall_success_rate_fin}
1-\epsilon =  \int_{0}^{\infty} f(\gamma) \big( 1 - \Omega (\gamma , r )  \big) \mathrm{d}\gamma.
\end{equation}


\section{Goodput Maximization with Quantized Feedback}

The overall success rate given in Eq. \eqref{eq_overall_success_rate_fin} is achieved when the transmitter selects a fixed rate. A better performance can be achieved if the transmitter has knowledge on the channel gain. In this section, we investigate the goodput in finite blocklength regime for the system model presented in Fig. \ref{fig_system_model}.

Given a transmission rate $ r_i $ and by using Eq. \eqref{eq_overall_success_rate_fin}, the goodput of the quantization region between $ \phi_i $ and $\phi_{i+1}$ yields 
\begin{equation}\label{eq_exp_goodput}
r_i \int_{\phi_i}^{\phi_{i+1}} f(\gamma) \big(1-\Omega(\gamma,r_i)\big) \mathrm{d}\gamma .
\end{equation}
The total goodput of the communication system can be calculated by summing Eq. \eqref{eq_exp_goodput} over all $ i $'s. The integral in Eq. \eqref{eq_exp_goodput} can be decomposed as 
\whencolumns{
	\begin{equation}\label{eq_obj_func_sep}
	\int_{\phi_i}^{\phi_{i+1}} f(\gamma) \big( 1 - \Omega (\gamma , r_i )  \big) \mathrm{d}\gamma =
	F(\phi_{i+1}) - F(\phi_i) - \int_{\phi_i}^{\phi_{i+1}} f(\gamma) \Omega (\gamma , r_i )   \mathrm{d}\gamma
	\end{equation}
}{
	\begin{multline}\label{eq_obj_func_sep}
	\int_{\phi_i}^{\phi_{i+1}} f(\gamma) \big( 1 - \Omega (\gamma , r_i )  \big) \mathrm{d}\gamma =
	\\
	F(\phi_{i+1}) - F(\phi_i) - \int_{\phi_i}^{\phi_{i+1}} f(\gamma) \Omega (\gamma , r_i )   \mathrm{d}\gamma
	\end{multline}
}
Notably, the objective function defined for infinite blocklength regime is equivalent to the expression above without the third term. The third term in Eq. \eqref{eq_obj_func_sep} introduces a backoff due to the finite blocklength. Notice that setting $ r_i = 0 $ yields the goodput in Eq. \eqref{eq_exp_goodput} to be zero. As $ r_i $ increases, Eq. \eqref{eq_exp_goodput} increases up to some value but then, it starts to decrease due to the third term since it can be shown that
\begin{equation}\label{key}
\lim\limits_{r_i \rightarrow \infty} \int_{\phi_i}^{\phi_{i+1}} f(\gamma)  \Omega(\gamma,r_i)    \mathrm{d}\gamma = F(\phi_{i+1}) - F(\phi_i) ,
\end{equation}
which reduces the goodput to zero.

To demonstrate the differences between the achievable goodput values with infinite and finite blocklength, we show a case study where $ h $ is Rician with $ K=10 $, $ K $ being the ratio between the power in the direct path and the power in the scattered paths, and therefore $ \gamma $ is a noncentral-$ \chi^2 $ distributed random variable with 
\begin{equation}\label{key}
f_{\chi^2}(\gamma) = (K+1) \exp\big( -\gamma(K+1)+K \big) I_0\left( 2\sqrt{\gamma K(K+1)} \right) ,
\end{equation}
where $ I_0(\cdot) $ is the zero-order modified Bessel function of the first kind. We select $ \Phi = 3 $ and the quantization regions are $ \phi_i = \{0, 0.9, 1.2, \infty\} $, not necessarily being the optimum selections. The achievable goodput values for each quantization region with respect to $ r_i \in [0, 5]$ are depicted in Fig. \ref{fig_diff_bw_quant_regions_for_inf_fin}. More specifically, the black solid lines and dashed red lines represent the results achieved with Eq. \eqref{eq_opt_prob_infinite_block} and Eq. \eqref{eq_exp_goodput}, respectively. Maximums are highlighted with square and circle markers. 

\begin{figure}[t]
	\centering
	\whencolumns{
		\includegraphics[width=.6\linewidth]{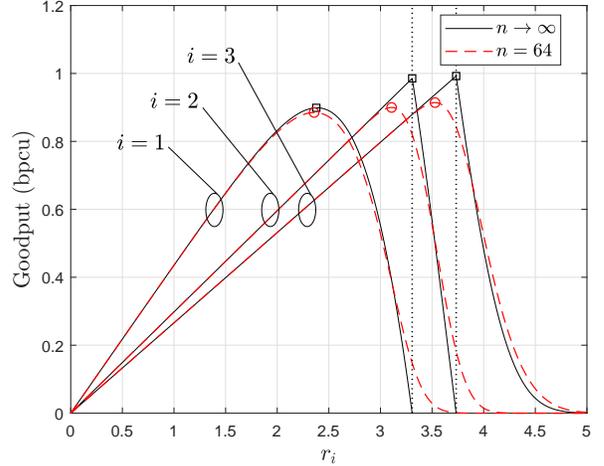}
	}{
		\includegraphics[width=1\linewidth]{diff_bw_quant_regions_for_inf_fin.eps}
	}
	\caption{An illustration of the achievable goodput values with infinite and finite blocklengths for $ \Phi = 3 $ where $ \phi_i = \{0, 0.9, 1.2, \infty\} $.}
	\label{fig_diff_bw_quant_regions_for_inf_fin}
\end{figure}

Fig. \ref{fig_diff_bw_quant_regions_for_inf_fin} reveals several important findings. It can be seen that there are significant differences between the maximum achievable goodput values for finite and infinite blocklength regimes at each quantization region. Due to the backoff, maximum achievable goodput is lower in finite blocklength for all quantization regions. In addition, one can end up with even lower goodput values if the optimal scheme for asymptotic regime is applied for finite blocklength. On the other hand, this difference is slightly smaller for $ i=1 $ but increases as $ i $ increases. Similar empirical results are observed in various different channel types. Thus, the impact of finite blocklength is smaller for quantization regions with lower channel power, i.e. $ i=1 $, whereas it increases for quantization regions with higher channel power, i.e. $ i>1 $. This is due to the fact that the channel dispersion, $ V(\gamma) $, is shown to be zero when the channel experiences deep fading, which is the case when $ i=1 $. However, when $ i>1 $, since the channel gain becomes higher, the channel converges to an AWGN channel, i.e. $ V(\gamma) \neq 0 $, and the impact of finite blocklength becomes extremely important.\footnote{A more thorough analysis is presented in Fig. \ref{fig_benchmark_comparison} in Sec. VII.}   


\subsection{Problem Formulation}

The optimization problem on the maximization of the goodput in finite blocklength regime can be formulated as 
\begin{subequations}
	\begin{align}
		\underset{r_i,\phi_i}{\text{maximize}} ~~& \sum_{i=1}^{\Phi} r_i \int_{\phi_i}^{\phi_{i+1}} f(\gamma) \big(1-\Omega(\gamma,r_i)\big) \mathrm{d}\gamma 
		\label{eq_opt_prob_1_obj}
		\\
		\text{s.t.} ~~& r_i \geq 0, \label{eq_opt_prob_1_cons_1}
		\\
		& \phi_{i+1} \geq \phi_i, \label{eq_opt_prob_1_cons_2}
	\end{align}
\end{subequations}
where the maximization is done over the variables $ r_i $ and $ \phi_i $ with the constraints that are listed in Eq. \eqref{eq_opt_prob_1_cons_1} and Eq. \eqref{eq_opt_prob_1_cons_2}, respectively.\footnote{Notice that no constraint on the CEP is introduced in this optimization problem. Therefore this problem may not be suitable for URLLC applications where stringent requirements on reliability is introduced. Instead, this optimization problem is more appropriate for low-latency applications. However, we first investigate this optimization problem and present the results for the completeness of the study.}

The proposed optimization problem is difficult to solve since $ r_i $ and $ \phi_i $ are coupled together. To circumvent this difficulty, we introduce a block coordinate descent based iterative solution where we optimize one set of variables while keeping the others fixed, and vice versa \cite{beck_on_the, hong_a_unified}. Then, each sub-problem is solved in an iterative manner. Below, the sub-optimization problems are introduced 

Optimization problem 1:
	\begin{align}
	\underset{r_i}{\text{maximize}} ~~\sum_{i=1}^{\Phi} r_i \int_{\phi_i}^{\phi_{i+1}} f(\gamma) \big(1-\Omega(\gamma,r_i)\big) \mathrm{d}\gamma ~~~~~  \text{s.t.} ~~ \eqref{eq_opt_prob_1_cons_1}  . \label{eq_opt_prob_2_obj}
	\end{align}

Optimization problem 2:
	\begin{align}
	\underset{\phi_i}{\text{maximize}} ~~\sum_{i=1}^{\Phi} r_i \int_{\phi_i}^{\phi_{i+1}} f(\gamma) \big(1-\Omega(\gamma,r_i)\big) \mathrm{d}\gamma ~~~~~  \text{s.t.} ~~ \eqref{eq_opt_prob_1_cons_2} . \label{eq_opt_prob_3_obj}
	\end{align}

Next, we investigate these two optimization problems.

\subsection{Problem 1: Optimization of Transmission Rate}

\begin{theorem}
	For fixed $ \phi_i $, the optimum $ r_i $ can be found by solving the following equation
	\begin{equation}\label{eq_theorem_opt_r_i}
	r_i = -\frac{\int_{\phi_i}^{\phi_{i+1}} f(\gamma) \big(1-\Omega(\gamma,r_i)\big) \mathrm{d}\gamma}{ \int_{\phi_i}^{\phi_{i+1}}H(\gamma,r_i) \mathrm{d}\gamma} ,
	\end{equation}
	where
	\begin{equation}\label{key}
	H(\gamma,r_i) = -\frac{1}{\sqrt{2\pi}} \exp\left(-\frac{n\big(C(\gamma)-r_i\big)^2}{2V(\gamma)}\right) f(\gamma)\sqrt{\frac{n}{V(\gamma)}} .
	\end{equation}
\end{theorem}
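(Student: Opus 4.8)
The statement is a first-order optimality condition for the unconstrained (since the optimum clearly has $r_i > 0$, the constraint $r_i \geq 0$ is inactive) maximization of a single term in the sum over $i$, because the objective in \eqref{eq_opt_prob_2_obj} is separable in the $r_i$'s once the $\phi_i$'s are fixed. So the plan is to fix $i$, write $g_i(r_i) = r_i \int_{\phi_i}^{\phi_{i+1}} f(\gamma)\big(1-\Omega(\gamma,r_i)\big)\,\mathrm{d}\gamma$, and set $g_i'(r_i) = 0$. Differentiating under the integral sign with the product rule gives $g_i'(r_i) = \int_{\phi_i}^{\phi_{i+1}} f(\gamma)\big(1-\Omega(\gamma,r_i)\big)\,\mathrm{d}\gamma + r_i \int_{\phi_i}^{\phi_{i+1}} f(\gamma)\big(-\partial_{r_i}\Omega(\gamma,r_i)\big)\,\mathrm{d}\gamma$. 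Rearranging for $r_i$ yields exactly \eqref{eq_theorem_opt_r_i}, provided one identifies $H(\gamma,r_i) = -f(\gamma)\,\partial_{r_i}\Omega(\gamma,r_i)$ — wait, checking signs: $H$ as defined is negative, and we need $g_i'(r_i)=0 \iff r_i = -\frac{\int f(1-\Omega)}{\int f\,\partial_{r_i}\Omega}$, so we must have $H(\gamma,r_i) = f(\gamma)\,\partial_{r_i}\Omega(\gamma,r_i)$ with that partial derivative being negative.

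The concrete computation is the chain rule on $\Omega(\gamma,r_i) = Q\big((C(\gamma)-r_i)\sqrt{n/V(\gamma)}\big)$. Since $\frac{\mathrm{d}}{\mathrm{d}t}Q(t) = -\frac{1}{\sqrt{2\pi}}\exp(-t^2/2)$ and the inner argument has derivative $-\sqrt{n/V(\gamma)}$ with respect to $r_i$, we get
\begin{equation}\label{key}
\frac{\partial}{\partial r_i}\Omega(\gamma,r_i) = -\frac{1}{\sqrt{2\pi}}\exp\!\left(-\frac{n(C(\gamma)-r_i)^2}{2V(\gamma)}\right)\!\left(-\sqrt{\frac{n}{V(\gamma)}}\right),
\end{equation}
so $f(\gamma)\,\partial_{r_i}\Omega(\gamma,r_i) = \frac{1}{\sqrt{2\pi}}\exp(\cdots)\,f(\gamma)\sqrt{n/V(\gamma)} = -H(\gamma,r_i)$. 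Hence $g_i'(r_i) = \int f(1-\Omega)\,\mathrm{d}\gamma + r_i\int H\,\mathrm{d}\gamma$, and setting this to zero and solving for $r_i$ gives \eqref{eq_theorem_opt_r_i}. I would also note $\int_{\phi_i}^{\phi_{i+1}} H(\gamma,r_i)\,\mathrm{d}\gamma < 0$ so the ratio is well-defined and positive.

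To make this a genuine characterization of the \emph{maximum} rather than just a stationary point, I would add the boundary/limiting behavior already noted in the text: $g_i(0)=0$, $g_i(r_i)>0$ for small $r_i>0$ (since $1-\Omega>0$ on a set of positive measure), and $g_i(r_i)\to 0$ as $r_i\to\infty$ by the displayed limit $\lim_{r_i\to\infty}\int f\,\Omega\,\mathrm{d}\gamma = F(\phi_{i+1})-F(\phi_i)$. Continuity of $g_i$ then forces an interior maximizer, which must satisfy the stationarity equation. (A stronger claim of uniqueness would require showing $g_i$ is quasi-concave or that the stationary point is unique; I would either prove concavity of $r_i\mapsto \int f\,\Omega\,\mathrm{d}\gamma$ on the relevant range or simply present \eqref{eq_theorem_opt_r_i} as a necessary condition to be solved numerically, which is how the subsequent iterative algorithm uses it.)

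The main obstacle is not the differentiation — that is routine — but the justification for differentiating under the integral sign and, if one wants it, the argument that the stationary point is the unique global maximum. For the former, on the compact interval $[\phi_i,\phi_{i+1}]$ the integrand $f(\gamma)\big(1-\Omega(\gamma,r_i)\big)$ and its $r_i$-derivative $-H(\gamma,r_i)$ are continuous and bounded (the exponential factor is bounded by $1$ and $\sqrt{n/V(\gamma)}$ is bounded away from the singularity $V\to 0$ only if $\phi_i>0$; for $\phi_i = 0$ one checks that $V(\gamma)\to 0$ is compensated because $\exp(-n(C-r_i)^2/(2V))\to 0$ faster, keeping $H$ bounded), so dominated convergence applies and the interchange is valid. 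I would relegate this to a sentence. For global optimality I expect to lean on the limiting behavior argument above rather than a full concavity proof.
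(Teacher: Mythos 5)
Your derivation is correct and the computational core is the same as the paper's: differentiate $r_i\int_{\phi_i}^{\phi_{i+1}} f(\gamma)(1-\Omega(\gamma,r_i))\,\mathrm{d}\gamma$ by the product rule, use $Q'(t)=-\tfrac{1}{\sqrt{2\pi}}e^{-t^2/2}$ and the chain rule to get $f(\gamma)\,\partial_{r_i}\Omega(\gamma,r_i)=-H(\gamma,r_i)$, and set the first derivative to zero. (Note a small sign slip in your planning sentence, where you wrote $H=f\,\partial_{r_i}\Omega$ with the derivative negative; your subsequent computation correctly resolves this to $\partial_{r_i}\Omega>0$ and $H=-f\,\partial_{r_i}\Omega$, which is what the final rearrangement needs.) Where you diverge from the paper is in justifying that the stationarity equation actually identifies the optimum: the paper computes the second derivative, shows the objective is concave on $[0,r_i']$ and convex beyond the inflection point $r_i'$ (a ``concave--convex'' function), and combines this with the vanishing of the objective at $r_i=0$ and $r_i\to\infty$ to conclude the maximum lies in the concave region — which in particular makes the stationary point essentially unique, so that ``solving the equation'' is unambiguous and the fixed-point iteration in Algorithm 1 is well grounded. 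You instead use only the boundary behavior to get existence of an interior maximizer and present the equation as a necessary condition, explicitly deferring uniqueness; that is logically sound but weaker, since without the second-order (or quasi-concavity) analysis the equation could in principle admit spurious solutions. On the other hand, you supply a justification for differentiating under the integral sign (with the correct observation that the $V(\gamma)\to 0$ endpoint at $\phi_i=0$ is harmless), which the paper passes over silently — a worthwhile addition. If you want your write-up to match the strength of the paper's claim, add the second-derivative computation (it is one more application of the same chain rule) and the concave--convex argument.
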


\begin{proof}
	We first investigate the convexity of the objective function with respect to $ r_i $. The second-order derivative of the objective function follows
	\whencolumns{
		\begin{equation} \label{eq_obj_func_second_deriv}
		\frac{\partial^2}{\partial r_i^2} r_i \int_{\phi_i}^{\phi_{i+1}} f(\gamma) \big(1-\Omega(\gamma,r_i)\big) \mathrm{d}\gamma =  2 \int_{\phi_i}^{\phi_{i+1}} H(\gamma,r_i) \mathrm{d}\gamma + r_i \int_{\phi_i}^{\phi_{i+1}} H(\gamma,r_i) \frac{n\big(C(\gamma)-r_i\big)}{V(\gamma)} \mathrm{d}\gamma
		\end{equation}
	}{
		\begin{multline} \label{eq_obj_func_second_deriv}
			\frac{\partial^2}{\partial r_i^2} r_i \int_{\phi_i}^{\phi_{i+1}} f(\gamma) \big(1-\Omega(\gamma,r_i)\big) \mathrm{d}\gamma = 
			\\
			2 \int_{\phi_i}^{\phi_{i+1}} H(\gamma,r_i) \mathrm{d}\gamma + r_i \int_{\phi_i}^{\phi_{i+1}} H(\gamma,r_i) \frac{n\big(C(\gamma)-r_i\big)}{V(\gamma)} \mathrm{d}\gamma
		\end{multline}
	}
	where we used the following equality
	\begin{equation}\label{key}
	\frac{\partial}{\partial t} Q(t) = -\frac{1}{\sqrt{2\pi}} \exp\left(-\frac{t^2}{2}\right).
	\end{equation}

	Since $ n>0 $, $ C(\gamma)>0 $, $ V(\gamma)>0 $, and, by definition, $ f(\gamma)>0 $, and, however, $ H(\gamma,r) $ is always negative. Therefore, the first term in Eq. \eqref{eq_obj_func_second_deriv} is always negative. Similar to the first term, the second term is also negative as long as $ C(\gamma)>r_i $. However, depending on the interval $ \phi_i $ and $ \phi_{i+1} $, the second integral becomes positive as $ r_i $ increases. Thus, Eq. \eqref{eq_obj_func_second_deriv} is negative between $ r_i\in[0,r_i'] $ and the objective function is concave and for higher values of $ r_i $, Eq. \eqref{eq_obj_func_second_deriv} is positive and the objective is convex, where $ r_i' $ is the point of inflection and can be found by solving the following equation 
	\begin{multline}\label{key}
	\bigg[ 2 \int_{\phi_i}^{\phi_{i+1}} H(\gamma,r_i) \mathrm{d}\gamma 
	\\
	+ r_i \int_{\phi_i}^{\phi_{i+1}} H(\gamma,r_i) \frac{n\big(C(\gamma)-r_i\big)}{V(\gamma)} \mathrm{d}\gamma \bigg]_{r_i = r_i'} = 0.
	\end{multline}
	We call this function a concave-convex function \cite{cai_stocastic}. 

	Since the function is concave-convex and since it tends to zero as $ r_i=0 $ and $ r_i \rightarrow \infty $, one can show that the maximum always lies within the concave region. Therefore, the optimum value of $ r_i $ that maximizes the objective function can be found by taking the first derivative, setting it to zero, and solving it for $ r_i $, which yields Eq. \eqref{eq_theorem_opt_r_i}.
\end{proof}

Although it is possible to numerically evaluate optimum $ r_i $, no closed form expression can be achieved due to the integrals. Next, we derive an approximation on $ r_i $.

\begin{theorem}
	A close appoximation on the optimal $ r_i $ can be achieved with the following equation
	\begin{multline}\label{eq_theorem_1}
	\Big[ F(\min\{\Delta_2, \phi_{i+1}\}) - F(\max\{\Delta_1, \phi_i\}) \Big] u(\phi_{i+1}-\Delta_1) u(\Delta_2-\phi_i)
	\\
	\times \left(\frac{1}{2} + b\Big(a-\mathbb{E}\big[\gamma\big|\max\{\Delta_1, \phi_i\} \leq \gamma \leq  \min\{\Delta_2, \phi_{i+1}\}\big]\Big)\right) 
	\\
	+ r_i \Big[ F(\phi_{i+1}) - F(\max\{\Delta_2, \phi_i\}) \Big] u(\phi_{i+1} - \Delta_2 ) 
    \\
	= - r_i \int_{\phi_i}^{\phi_{i+1}} H(\gamma,r_i) \mathrm{d}\gamma .
	\end{multline}
	where $ u(\cdot) $ represents the unit step function and $\Delta_1 = \frac{1}{2b} + a$, $\Delta_2 = -\frac{1}{2b} + a$ with $a  = \frac{1}{P}(\exp(r_i)-1)$ and $b = -P \sqrt{\frac{n}{2\pi(\exp(2r_i)-1)}}$.
\end{theorem}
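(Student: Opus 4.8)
The plan is to keep the first‑order optimality condition of Problem~1 exact in one of its two integrals and to replace the other, intractable one by a closed form obtained from a piecewise‑affine approximation of the decoding‑error term. Multiplying the fixed‑point relation \eqref{eq_theorem_opt_r_i} through by its denominator gives the stationarity equation
\[
\int_{\phi_i}^{\phi_{i+1}} f(\gamma)\big(1-\Omega(\gamma,r_i)\big)\,\mathrm{d}\gamma + r_i\int_{\phi_i}^{\phi_{i+1}} H(\gamma,r_i)\,\mathrm{d}\gamma = 0 ,
\]
and I would leave $\int H\,\mathrm{d}\gamma$ untouched, since it is a single well‑behaved Gaussian‑type integral, while approximating $\int_{\phi_i}^{\phi_{i+1}} f(\gamma)\big(1-\Omega(\gamma,r_i)\big)\,\mathrm{d}\gamma$ in closed form.

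The approximation proceeds by two linearizations. First, writing $\Omega(\gamma,r_i)=Q\big(g(\gamma)\big)$ with $g(\gamma)=\big(C(\gamma)-r_i\big)\sqrt{n/V(\gamma)}$, I replace $Q$ by its tangent at the inflection point $t=0$, i.e. $Q(t)\approx \tfrac12 - t/\sqrt{2\pi}$, clamped to $0$ and to $1$ outside the band where this line lies in $[0,1]$; since $Q\big(g(\gamma)\big)$ decreases from $\approx 1$ in deep fade to $\approx 0$ for large $\gamma$, the transition is centred at the $\gamma=a$ solving $g(a)=0$, i.e. $C(a)=r_i$, which is $a=\tfrac1P(\exp(r_i)-1)$. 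Second, I linearize $g$ about $\gamma=a$: because the factor $C(\gamma)-r_i$ vanishes there, $g'(a)=C'(a)\sqrt{n/V(a)}$, and inserting $C(\gamma)$ from \eqref{eq_shannon_cap} and $V(\gamma)$ from \eqref{eq_channel_capacity} evaluated at $\gamma=a$ collapses $g'(a)/\sqrt{2\pi}$ exactly to $-b$, so that $1-\Omega(\gamma,r_i)\approx \tfrac12 + b(a-\gamma)$ with $b=-P\sqrt{n/\big(2\pi(\exp(2r_i)-1)\big)}<0$. Setting this affine function equal to $0$ and to $1$ yields the break‑points $\Delta_1=a+\tfrac1{2b}$ and $\Delta_2=a-\tfrac1{2b}$, with $\Delta_1<a<\Delta_2$, so the clamped profile equals $0$ below $\Delta_1$ and $1$ above $\Delta_2$.

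I would then substitute this clamped‑affine profile into $\int_{\phi_i}^{\phi_{i+1}} f(\gamma)\big(1-\Omega(\gamma,r_i)\big)\,\mathrm{d}\gamma$ and split $[\phi_i,\phi_{i+1}]$ at $\Delta_1$ and $\Delta_2$. The part below $\Delta_1$ contributes $\approx 0$; over $\big[\max\{\Delta_1,\phi_i\},\min\{\Delta_2,\phi_{i+1}\}\big]$ the integrand is $f(\gamma)\big(\tfrac12+b(a-\gamma)\big)$, and applying the identity $\int_\alpha^\beta \gamma f(\gamma)\,\mathrm{d}\gamma=\big(F(\beta)-F(\alpha)\big)\,\mathbb{E}\big[\gamma\mid \alpha\le\gamma\le\beta\big]$ turns its integral into the first term of the claimed equation; over the part above $\Delta_2$ the integrand is $\approx f(\gamma)$, contributing $F(\phi_{i+1})-F(\max\{\Delta_2,\phi_i\})$. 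The unit‑step factors $u(\phi_{i+1}-\Delta_1)$, $u(\Delta_2-\phi_i)$ and $u(\phi_{i+1}-\Delta_2)$ simply record which of these three sub‑intervals is non‑empty for the given placement of $[\phi_i,\phi_{i+1}]$ relative to the transition band $[\Delta_1,\Delta_2]$. Substituting the resulting closed form back into the stationarity equation and rearranging gives \eqref{eq_theorem_1}.

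The main obstacle, beyond the case bookkeeping that the step functions absorb, is justifying the word ``close'': the construction rests on (i)~$Q$ being accurately reproduced by its $t=0$ tangent across the narrow band where it is not essentially $0$ or $1$, and (ii)~$g(\gamma)$ being nearly affine over that band, and one must argue these hold outside pathological regimes (very small $n$, or a quantization region spanning much of the transition). I would support this with a bound on the residual $\int_{\phi_i}^{\phi_{i+1}} f(\gamma)\big|1-\Omega(\gamma,r_i)-L(\gamma)\big|\,\mathrm{d}\gamma$, where $L$ denotes the clamped‑affine profile, together with the numerical agreement illustrated in the figures; note finally that $a,b,\Delta_1,\Delta_2$ still depend on $r_i$, so \eqref{eq_theorem_1} remains an implicit equation for $r_i$, now with every integral in closed form except the benign $\int H\,\mathrm{d}\gamma$.
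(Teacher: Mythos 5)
Your proposal is correct and follows essentially the paper's own route: linearize $\Omega(\gamma,r_i)$ at $\gamma=a$ (where $C(a)=r_i$) into a clamped affine profile with slope $b$, integrate it in closed form over the sub-intervals determined by $\Delta_1,\Delta_2$ (the case bookkeeping the paper merges via the unit-step factors), and insert this into the first-order condition of Theorem~1 while keeping $\int H\,\mathrm{d}\gamma$ exact; your explicit computation of $b$ from $g'(a)$ matches the paper, which simply cites a reference for the linearization. The only mismatch is the isolated factor $r_i$ multiplying the second left-hand-side term of \eqref{eq_theorem_1}, which your derivation (and the paper's own Eq.~\eqref{eq_q_func_int_approx}) would not produce and which appears to be an inconsistency in the stated equation rather than a gap in your argument.
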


\begin{proof}
	The objective function for single $ i $ can be approximated by linearizing the $ \Omega(\cdot) $ function at the point $ \gamma  = a $. The approximation of the $ \Omega(\cdot) $ function can be written as \cite{makki_finite_block}
	\begin{align}\label{eq_q_func_approx}
	\Omega(\gamma,r_i)  \approx 
	\begin{cases}
	1 &, ~ \gamma < \Delta_1,
	\\
	1/2 + b(\gamma - a) &, ~ \gamma \in \left[\Delta_1, \Delta_2\right],
	\\
	0 &,~ \gamma > \Delta_2,
	\end{cases}
	\end{align}
	where $ b $ is the derivative of the $ Q(\cdot) $ function at $ \gamma = a $, that is 
	\begin{equation}\label{key}
	b = \frac{\partial}{\partial \gamma} \Omega(\gamma,r_i) ~ \bigg|_{\gamma = a} 
	= -P \sqrt{\frac{n}{2\pi(\exp(2r_i)-1)}}.
	\end{equation} 
	
	Implementing $ a $ and $ b $ into Eq. \eqref{eq_q_func_approx} and solving the integral yields
	\whencolumns{
		\begin{equation}\label{eq_approx_long_version}
		\int_{\phi_i}^{\phi_{i+1}} f(\gamma) \big(1-\Omega(\gamma,r_i)\big) \mathrm{d}\gamma \approx
		\begin{cases} 
		0, ~ \text{if} ~ \phi_i \leq \phi_{i+1} \leq \Delta_1,
		\\
		r_i \left(\frac{1}{2}+ba-b \mathbb{E}\left[\gamma | \Delta_1\leq\gamma\leq \phi_{i+1}\right]\right)\left(F(\phi_{i+1})-F(\Delta_1)\right), ~ \\ ~~~~~~~~~~~~~~~~~~~~~~~~~~ \text{if} ~ \phi_i \leq \Delta_1 ~\text{and}~ \Delta_1 \leq \phi_{i+1} \leq \Delta_2 ,
		\\
		r_i\left(F(\phi_{i+1}-F(\Delta_2))\right)+r_i \left(\frac{1}{2}+ba-b \mathbb{E}\left[\gamma | \Delta_1\leq\gamma\leq\Delta_2\right]\right)
		\\
		~~~~\times\left(F(\Delta_2)-F(\Delta_1)\right), ~  \text{if} ~ \phi_i \leq \Delta_1 ~\text{and}~ \Delta_2 \leq \phi_{i+1},
		\\
		r_i \left(\frac{1}{2}+ba-b \mathbb{E}\left[\gamma | \phi_i\leq\gamma\leq \phi_{i+1}\right]\right)\left(F(\phi_{i+1})-F(\phi_{i})\right), ~ \\ ~~~~~~~~~~~~~~~~~~~~~~~~~~ \text{if} ~ \Delta_1 \leq \phi_{i} \leq \phi_{i+1} \leq  \Delta_2,
		\\
		r_i\left(F(\phi_{i+1})-F(\Delta_2)\right) + r_i \left(\frac{1}{2}+ba-b \mathbb{E}\left[\gamma | \phi_i\leq\gamma\leq \Delta_2\right]\right)
		\\
		~~~~\times\left(F(\Delta_2)-F(\phi_{i})\right), ~ \text{if} ~ \Delta_1 \leq \phi_{i} \leq  \Delta_2 \leq \phi_{i+1} ,
		\\
		r_i\left(F(\phi_{i+1})-F(\phi_i)\right) , ~ \text{if} ~ \Delta_2 \leq \phi_{i} \leq \phi_{i+1} ,
		\end{cases}
		\end{equation}
	}{
		\begin{align}\label{eq_approx_long_version}
		& \int_{\phi_i}^{\phi_{i+1}}  f(\gamma) \big(1-\Omega(\gamma,r_i)\big) \mathrm{d}\gamma \approx \nonumber
		\\ & ~~
		\begin{cases} 
		0, ~ \text{if} ~ \phi_i \leq \phi_{i+1} \leq \Delta_1,
		\\
		r_i \left(\frac{1}{2}+ba-b \mathbb{E}\left[\gamma | \Delta_1\leq\gamma\leq \phi_{i+1}\right]\right)\left(F(\phi_{i+1})-F(\Delta_1)\right), ~ \\ ~~~~~~~~~~~~~~~~~~~~~~~~~~ \text{if} ~ \phi_i \leq \Delta_1 ~\text{and}~ \Delta_1 \leq \phi_{i+1} \leq \Delta_2 ,
		\\
		r_i\left(F(\phi_{i+1}-F(\Delta_2))\right)+r_i \left(\frac{1}{2}+ba-b \mathbb{E}\left[\gamma | \Delta_1\leq\gamma\leq\Delta_2\right]\right)
		\\
		~~~~\times\left(F(\Delta_2)-F(\Delta_1)\right), ~  \text{if} ~ \phi_i \leq \Delta_1 ~\text{and}~ \Delta_2 \leq \phi_{i+1},
		\\
		r_i \left(\frac{1}{2}+ba-b \mathbb{E}\left[\gamma | \phi_i\leq\gamma\leq \phi_{i+1}\right]\right)\left(F(\phi_{i+1})-F(\phi_{i})\right), ~ \\ ~~~~~~~~~~~~~~~~~~~~~~~~~~ \text{if} ~ \Delta_1 \leq \phi_{i} \leq \phi_{i+1} \leq  \Delta_2,
		\\
		r_i\left(F(\phi_{i+1})-F(\Delta_2)\right) + r_i \left(\frac{1}{2}+ba-b \mathbb{E}\left[\gamma | \phi_i\leq\gamma\leq \Delta_2\right]\right)
		\\
		~~~~\times\left(F(\Delta_2)-F(\phi_{i})\right), ~ \text{if} ~ \Delta_1 \leq \phi_{i} \leq  \Delta_2 \leq \phi_{i+1} ,
		\\
		r_i\left(F(\phi_{i+1})-F(\phi_i)\right) , ~ \text{if} ~ \Delta_2 \leq \phi_{i} \leq \phi_{i+1} ,
		\end{cases}
		\end{align}
	}
	where 
	\begin{equation}\label{key}
	\mathbb{E}\left[\gamma | \gamma_1\leq\gamma\leq\gamma_2\right] = \frac{1}{F(\gamma_2)-F(\gamma_1) } \int_{\gamma_1}^{\gamma_2}\gamma f(\gamma) \mathrm{d}\gamma.
	\end{equation}
	represents the expected value of the truncated random variable $ \gamma $ such that $ \gamma $ can only have values between $ \gamma_1\leq\gamma\leq\gamma_2 $. Merging all cases in Eq. \eqref{eq_approx_long_version} leads to
	\whencolumns{
		\begin{multline} \label{eq_q_func_int_approx}
		\int_{\phi_i}^{\phi_{i+1}} f(\gamma) \big(1-\Omega(\gamma,r_i)\big) \mathrm{d}\gamma \approx 
		\Big[ F(\phi_{i+1}) - F(\max\{\Delta_2, \phi_i\}) \Big] u(\phi_{i+1} - \Delta_2)
		\\
		+ \Big[ F(\min\{\Delta_2, \phi_{i+1}\}) - F(\max\{\Delta_1, \phi_i\}) \Big]  u(\phi_{i+1}-\Delta_1) u(\Delta_2-\phi_i) 
		\\
		\times \left( \frac{1}{2} + b\Big(a-\mathbb{E}\big[\gamma\big|\max\{\Delta_1, \phi_i\} \leq \gamma \leq  \min\{\Delta_2, \phi_{i+1}\}\big]\Big) \right) .
		\end{multline}
	}{
		\begin{multline} \label{eq_q_func_int_approx}
		\int_{\phi_i}^{\phi_{i+1}} f(\gamma) \big(1-\Omega(\gamma,r_i)\big) \mathrm{d}\gamma \approx 
		\\
		\Big[ F(\min\{\Delta_2, \phi_{i+1}\}) - F(\max\{\Delta_1, \phi_i\}) \Big]  u(\phi_{i+1}-\Delta_1) u(\Delta_2-\phi_i)
		\\
		\times \left( \frac{1}{2} + b\Big(a-\mathbb{E}\big[\gamma\big|\max\{\Delta_1, \phi_i\} \leq \gamma \leq  \min\{\Delta_2, \phi_{i+1}\}\big]\Big) \right) 
		\\
		+ \Big[ F(\phi_{i+1}) - F(\max\{\Delta_2, \phi_i\}) \Big] u(\phi_{i+1} - \Delta_2) .
		\end{multline}
	}
	Finally, setting the first order derivative of Eq. \eqref{eq_q_func_int_approx} to zero yields Eq. \eqref{eq_theorem_1}.
\end{proof}

\subsection{Problem 2: Optimization of Quantization Regions}

Now we investigate the second step of the optimization problem where $ r_i $ is fixed but we let $ \phi_i $ to vary. We then assume that a constraint qualification holds at the maximizers of (13), so that the KKT conditions are necessary optimality conditions. 

\begin{theorem}
	For fixed $ r_i $, the optimal quantization regions can be calculated by solving the following equation for $ \phi_i $ 
	\begin{equation}\label{eq_condition_for_qi}
	\frac{r_{i-1}}{r_i} = \frac{1-\Omega(\phi_i,r_{i})}{1-\Omega(\phi_i,r_{i-1})} .
	\end{equation}
\end{theorem}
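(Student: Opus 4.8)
The plan is to treat \eqref{eq_opt_prob_3_obj} as a smooth nonlinear program in the vector of interior boundaries $(\phi_2,\dots,\phi_\Phi)$ — with the conventions that $\phi_1$ (equivalently $\phi_0=0$) and $\phi_{\Phi+1}=\infty$ are held fixed — and to extract the stated relation purely from the first-order KKT stationarity conditions, which, as stated just above the theorem, are necessary once a constraint qualification is assumed at the maximizer. The only analytic ingredient is the fundamental theorem of calculus applied to the limits of integration.

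First I would observe that each interior boundary $\phi_i$, $2\le i\le\Phi$, enters the objective
\[
G(\phi_2,\dots,\phi_\Phi)=\sum_{j=1}^{\Phi} r_j \int_{\phi_j}^{\phi_{j+1}} f(\gamma)\big(1-\Omega(\gamma,r_j)\big)\,\mathrm{d}\gamma
\]
through exactly two summands: as the upper limit of the $j=i-1$ term and as the lower limit of the $j=i$ term. Differentiating with the Leibniz rule then gives
\[
\frac{\partial G}{\partial \phi_i}
= r_{i-1}\, f(\phi_i)\big(1-\Omega(\phi_i,r_{i-1})\big) - r_i\, f(\phi_i)\big(1-\Omega(\phi_i,r_i)\big).
\]
Next I would form the Lagrangian $\mathcal{L}=G-\sum_i \lambda_i(\phi_i-\phi_{i+1})$ with multipliers $\lambda_i\ge 0$ for the ordering constraints \eqref{eq_opt_prob_1_cons_2}, and impose KKT stationarity $\partial\mathcal{L}/\partial\phi_i=0$. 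At a maximizer with a strict ordering $\phi_{i-1}<\phi_i<\phi_{i+1}$ the constraints adjacent to $\phi_i$ are inactive, so complementary slackness forces the corresponding multipliers to vanish and stationarity reduces to $\partial G/\partial\phi_i=0$. Since $f(\phi_i)>0$ by the standing assumption on the pdf, I can cancel $f(\phi_i)$ to obtain $r_{i-1}\big(1-\Omega(\phi_i,r_{i-1})\big)=r_i\big(1-\Omega(\phi_i,r_i)\big)$, and rearranging yields \eqref{eq_condition_for_qi}.

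The main obstacle will not be the differentiation, which is routine, but the bookkeeping around the constraints and the endpoints: arguing that at the relevant optimum the ordering constraints are slack (or, if some are active, that the corresponding regions collapse and can simply be discarded), treating $\phi_1$ and $\phi_{\Phi+1}=\infty$ separately since they do not produce an equation of the form \eqref{eq_condition_for_qi}, and — if one wants more than a necessary-condition statement — supplying a second-order or structural argument that the stationary point is actually a maximizer rather than a saddle. Given the constraint-qualification caveat already invoked, I would keep the argument at the level of first-order necessary conditions and merely remark on these points rather than develop them in full.
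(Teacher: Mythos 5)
Your proposal is correct and follows essentially the same route as the paper's proof: the paper likewise forms the Lagrangian over the ordering constraints, differentiates with the Leibniz rule to obtain the stationarity condition $r_{i-1}\big(1-\Omega(\phi_i,r_{i-1})\big)f(\phi_i)-r_i\big(1-\Omega(\phi_i,r_i)\big)f(\phi_i)-\lambda_i+\lambda_{i-1}=0$ together with complementary slackness, and then discards the multipliers and cancels $f(\phi_i)$ to reach \eqref{eq_condition_for_qi}. Your added remarks on strict ordering and the fixed endpoints simply make explicit the bookkeeping the paper summarizes as a ``direct investigation of the activeness of the linear constraint.''
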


\begin{proof}
	Let $ \lambda_i \geq 0 $ represents the Lagrangian multiplier associated with the CEP constraints. The Lagrangian function of \eqref{eq_opt_prob_3_obj} is
	\begin{equation}\label{key}
	\mathcal{L} = \sum_{i=1}^{\Phi} r_i \int_{\phi_i}^{\phi_{i+1}} f(\gamma) \big(1-\Omega(\gamma,r_i)\big) \mathrm{d}\gamma
	- \sum_{i=1}^{\Phi} \lambda_i (\phi_i - \phi_{i+1})
	\end{equation}
	where $ \lambda_i \geq 0 $ are the non-negative Lagrange multipliers that are associated with the constraint in Eq. \eqref{eq_opt_prob_1_cons_2}.
	
	By a direct investigation of the activeness of the linear constraint, we can write the KKT conditions as
	\whencolumns{
		\begin{align}
		& \frac{\partial}{\partial \phi_i} \mathcal{L} =  r_{i-1} \big(1-\Omega(\phi_i,r_{i-1})\big) f(\phi_i)  - r_{i} \big(1-\Omega(\phi_i,r_i)\big) f(\phi_i) -\lambda_i + \lambda_{i-1} = 0, \label{eq_kkt_derivative}
		\\
		& \lambda_i (\phi_i - \phi_{i+1}) = 0 , \label{eq_kkt_2}
		\end{align}
	}{
		\begin{align}
		\frac{\partial}{\partial \phi_i} \mathcal{L} &=  r_{i-1} \big(1-\Omega(\phi_i,r_{i-1})\big) f(\phi_i) \nonumber
		\\
		& ~~~~~~~~ - r_{i} \big(1-\Omega(\phi_i,r_i)\big) f(\phi_i) -\lambda_i + \lambda_{i-1} = 0, \label{eq_kkt_derivative}
		\\
		\lambda_i (\phi_i &- \phi_{i+1}) = 0 , \label{eq_kkt_2}
		\end{align}
	}
	where in Eq. \eqref{eq_kkt_derivative} we used the Leibniz integral rule which states that the derivative of a definite integral can be expressed as
	\whencolumns{
		\begin{equation}\label{key}
		\frac{\partial}{\partial v} \left( \int_{u_1(v)}^{u_2(v)} h(v,t) \mathrm{d}t \right) = \int_{u_1(v)}^{u_2(v)} \frac{\partial}{\partial v} h(v,t) \mathrm{d}t 
		+h(v,u_2(v)) \frac{\partial}{\partial v} u_2(v) -h(v,u_1(v)) \frac{\partial}{\partial v} u_1(v) .
		\end{equation}
	}{
		\begin{multline}\label{key}
		\frac{\partial}{\partial v} \left( \int_{u_1(v)}^{u_2(v)} h(v,t) \mathrm{d}t \right) = \int_{u_1(v)}^{u_2(v)} \frac{\partial}{\partial v} h(v,t) \mathrm{d}t 
		\\
		+h(v,u_2(v)) \frac{\partial}{\partial v} u_2(v) -h(v,u_1(v)) \frac{\partial}{\partial v} u_1(v) .
		\end{multline}
	}
	Solving the KKT conditions in Eq. \eqref{eq_kkt_derivative} and \eqref{eq_kkt_2} yields Eq. \eqref{eq_condition_for_qi}.
\end{proof}

Solving Eq. \eqref{eq_condition_for_qi} for $ \phi_i $ leads to the quantization scheme that maximizes the  overall goodput with the sub-optimum rate selections, which are also updated with every step of the iteration of the algorithm. Here, one can set $ \phi_1 = 0 $ and $ \phi_{\Phi+1} = \infty $, and Eq. \eqref{eq_condition_for_qi} needs to be solved for all $ \phi_i $, $ i=2,\cdots,\Phi $ in every iteration. Notably, an interesting outcome of Eq. \eqref{eq_condition_for_qi} is that the optimum selection of $ \phi_i $ given $ r_i $ and  $ r_{i-1} $ does not depend on $ f(\gamma) $. 


\begin{algorithm}[t]
    \small
	\caption{Goodput maximization}
	\begin{algorithmic}[1]
		\State Initialize $ \phi_i $ for $ i=1,2,\cdots, \Phi+1 $  
		\Repeat
			\For{$ i=1,2,\cdots, \Phi $} 
			\State Solve \eqref{eq_theorem_1} for $ r_i $
			\EndFor
			\For{$ i=\Phi,\Phi-1,\cdots, 2 $} 
			\State Solve \eqref{eq_condition_for_qi} for $ \phi_i $
			\EndFor
		\Until \textit{Convergence}
	\end{algorithmic} 
\end{algorithm}

The proposed algorithm is shown in Algorithm 1. It can be seen that the iterative algorithm continues to run until the quantization regions achieve a steady state in which the minimum difference between the values of the updated quantization regions at two subsequent iterations is less than some previously setted threshold value, denoted as $ \tau $.

\section{Goodput Maximization with Constraint on CEP}

The maximization problem studied in Sec. V does not impose a constraint on the CEP. However, reliability in a URLLC system is crucial and such a constraint needs to be implemented.  In this section, we jointly optimize the goodput and CEP.   

\subsection{Problem Formulation}

With a fixed rate $ r_i $, the CEP at the $ i^{\text{th}} $ quantization region can be calculated as 
\begin{equation}\label{eq_cep_ith_quant_reg}
\epsilon_i =  \int_{\phi_i}^{\phi_{i+1}} f(\gamma) \Omega(\gamma,r_i) \mathrm{d}\gamma
\end{equation}
and the overall CEP of the system is the sum of Eq. \eqref{eq_cep_ith_quant_reg} over all $ i $'s. Thus, with the reliability constraint, the optimization problem can now be formulated as 
\begingroup
\allowdisplaybreaks
\begin{subequations} 
	\begin{align}
	\underset{r_i,\phi_i, \epsilon_i}{\text{maximize}} ~~& \sum_{i=1}^{\Phi} r_i \int_{\phi_i}^{\phi_{i+1}} f(\gamma) \big(1-\Omega(\gamma,r_i)\big) \mathrm{d}\gamma  \label{eq_opt_prob_4_obj}
	\\
	\text{s.t.} ~~ & \sum_{i=1}^{\Phi} \epsilon_i \leq \varepsilon_m \label{eq_opt_prob_4_const_1}
	\\
	& \eqref{eq_opt_prob_1_cons_1}, \eqref{eq_opt_prob_1_cons_2}, \nonumber
	\end{align}
\end{subequations}
\endgroup
where Eq. \eqref{eq_opt_prob_4_const_1} specifies the maximum error probability. Notably, Eq. \eqref{eq_opt_prob_4_const_1} introduces contraint on the sum of the CEPs of the quantization regions. This constraint increases the complexity of the optimization problem such that the optimum cannot be achieved due to the intractable complexity. However, in order to relax the optimization problem, we introduce a new variable $ \varepsilon_i $, which denotes the CEP constraints on the $ i^{\text{th}} $ quantization region such that the sum of $ \varepsilon_i $'s is less than or equal to the overall CEP constraint $ \varepsilon_m $. Thus, the optimization problem follows
\begingroup
\allowdisplaybreaks
\begin{subequations}
	\begin{align}
	\underset{r_i,\phi_i,\epsilon_i, \varepsilon_i}{\text{maximize}} ~~& \sum_{i=1}^{\Phi} r_i \int_{\phi_i}^{\phi_{i+1}} f(\gamma) \big(1-\Omega(\gamma,r_i)\big) \mathrm{d}\gamma \label{eq_opt_prob_5_obj}
	\\
	\text{s.t.} ~~ & \epsilon_i \leq \varepsilon_i , \label{eq_opt_prob_5_const_1}
	\\
	& \sum_{i=1}^{\Phi} \varepsilon_i \leq \varepsilon_m, \label{eq_opt_prob_5_const_2}
	\\
	& \varepsilon_i > 0, \label{eq_opt_prob_5_const_3}
	\\
	& \eqref{eq_opt_prob_1_cons_1}, \eqref{eq_opt_prob_1_cons_2}. \nonumber
	\end{align}
\end{subequations}
\endgroup
Notice that $ \varepsilon_i $, for $ i=1,2,\cdots,\Phi $, are introduced as variables and the optimum selection of $ \varepsilon_i $ is also unknown. 

\begin{figure*}[t]
	\centering
	\begin{subfigure}{.5\textwidth}
		\centering
		\includegraphics[width=1\linewidth]{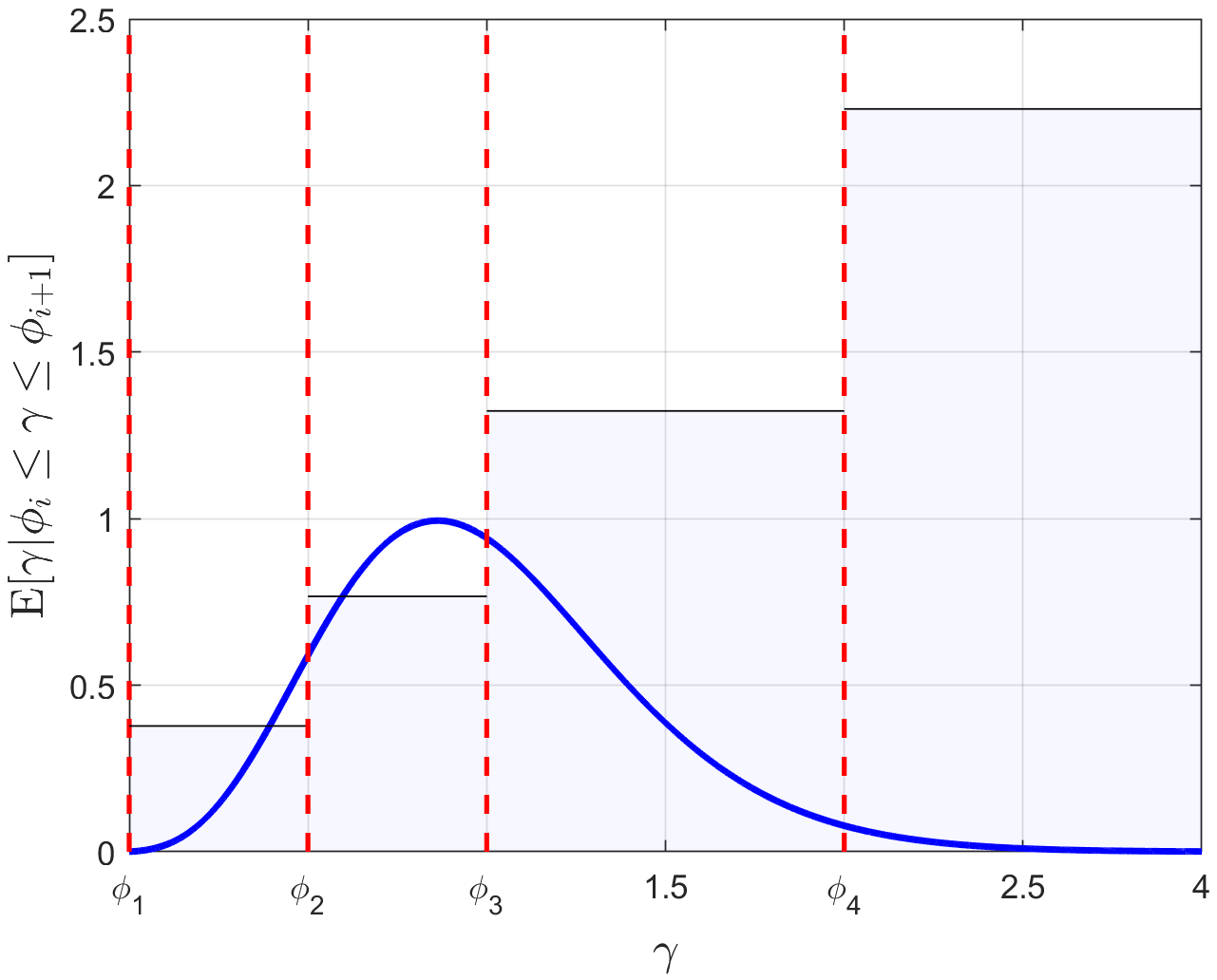}
		\caption{}
		\label{fig_trunc_mean_rician_K_10}
	\end{subfigure}%
	\begin{subfigure}{.5\textwidth}
		\centering
		\includegraphics[width=1\linewidth]{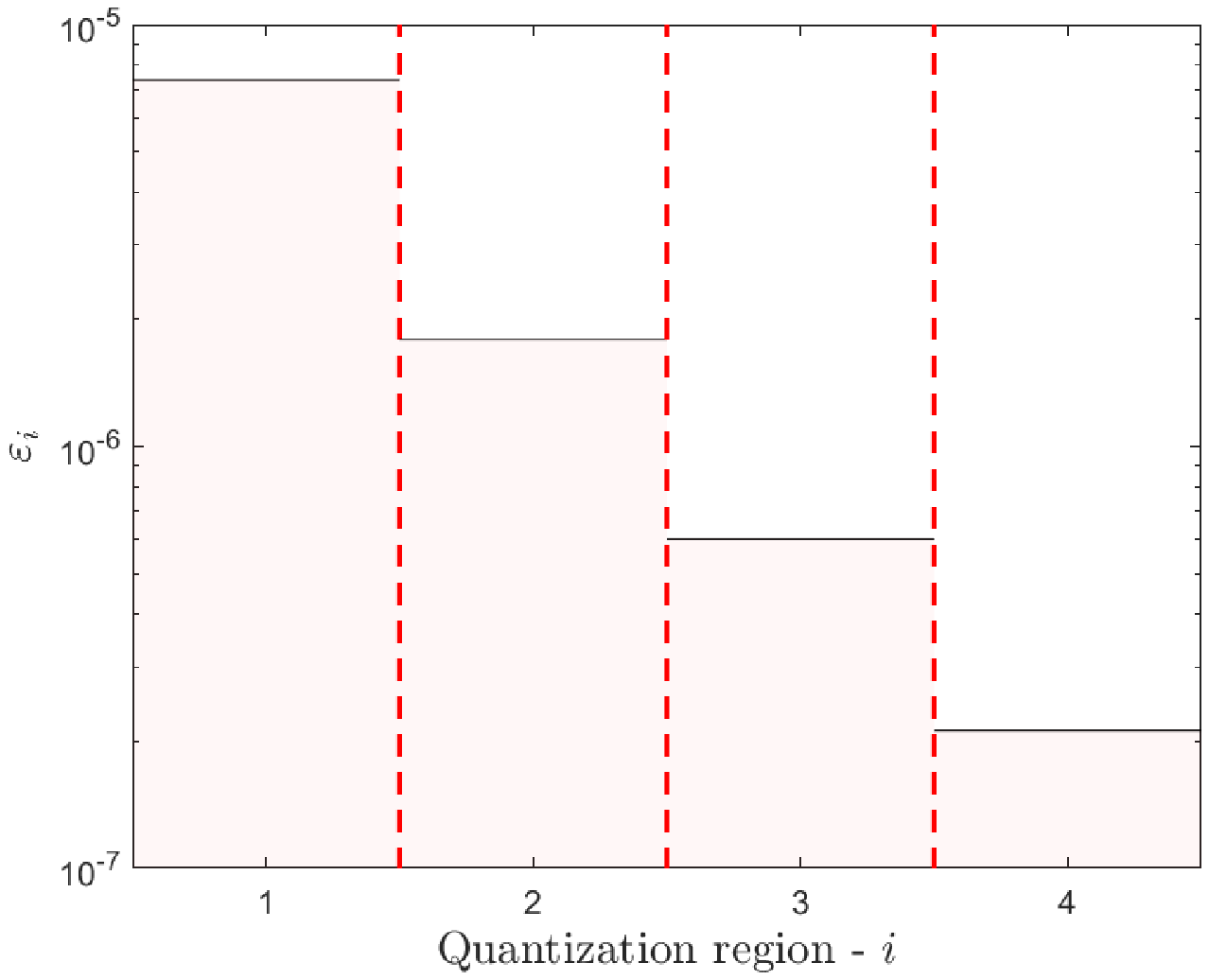}
		\caption{}
		\label{fig_cep_assignment_ricain_K10}
	\end{subfigure}
	\caption{An illustration of the proposed CEP assignment when $ \Phi = 4 $, $ h $ is Rician with $ K=10 $, and $ \phi_i = \{ 0, 0.5, 1, 2, \infty \} $ for $ i=1,\cdots,5 $. (a) Truncated mean values are depicted with respect to the quantization regions. Here, the pdf of the channel power is also shown with the blue solid line. (b) The assigned CEP values are shown with respect to the quantization regions, where $ \epsilon_m = 10^{-5} $. }
	\label{cep_assignment}
\end{figure*}

\subsection{Problem 1: CEP Allocation}
In this section, we evaluate the impact of the CEP allocation on the goodput maximization problem with quantized feedback. 
Assuming fixed $ \phi_i $, we cast the following CEP allocation problem
\begingroup
\allowdisplaybreaks
	\begin{align}
	\underset{r_i,\epsilon_i, \varepsilon_i}{\text{minimize}} ~~& \sum_{i=1}^{\Phi} r_i \int_{\phi_i}^{\phi_{i+1}} f(\gamma) \Omega(\gamma,r_i) \mathrm{d}\gamma
	\label{eq_opt_prob_6_obj}
	\\
	\text{s.t.} ~~ & \eqref{eq_opt_prob_5_const_1}, \eqref{eq_opt_prob_5_const_2}, \eqref{eq_opt_prob_5_const_3}, \eqref{eq_opt_prob_1_cons_1}  . \nonumber
	\end{align}
\endgroup
Notice that this problem is exponentially complex to find the optimal solution. However, we provide a suboptimal algorithm that can find a solution to the problem. 

Let us first assume that the optimum is achieved with equality for the constraint in Eq. \eqref{eq_opt_prob_5_const_1} noting that this assumption is a valid assumption for URLLC systems where $ \varepsilon_m $ is a small value, i.e. in the order of $ 10^{-6} $ or even lower. In this case, we relax the problem and introduce the following
\begingroup
	\begin{align}
	\underset{\varepsilon_i}{\text{minimize}} ~~ \sum_{i=1}^{\Phi} r_i \varepsilon_i ~~~~~ \text{s.t.} ~~  \eqref{eq_opt_prob_5_const_2}, \eqref{eq_opt_prob_5_const_3}, \eqref{eq_opt_prob_1_cons_1}  .
	\label{eq_opt_prob_7_obj}
	\end{align}
\endgroup
Assuming that $ r_i $ and $ \epsilon_i $ are independent variables, optimum can only be found if $ r_i $'s are all equal. In this case, the optimum is $ \varepsilon_i = \varepsilon_m/\Phi $. However, for any other cases no optimum solution can be found since the problem is unbounded and for any target value there would be feasible points with objective values smaller than the target.  

Recall that $ r_i $ and $ \epsilon_i $ are connected according to Eq. \eqref{eq_cep_ith_quant_reg}. Using this relationship a grid search algorithm can be proposed. However, a significant drawback here is the accuracy of the normal approximation since, as discussed in \cite{schiessl_thesis}, the normal approximation becomes inaccurate for very small blocklength, i.e., $ n< 100 $, and for low CEP , i.e., $ \epsilon < 10^{-5} $. Therefore, in this section, we propose a sub-optimal waterfilling-like algorithm where we set each $ \varepsilon_i $ inversely proportional to the truncated mean of $ \gamma $ within the quantized region. 

For the system model presented in Eq. \eqref{eq_system_model}, the optimum power selection for the $ m^{\text{th}} $ codeword transmission with the waterfilling algorithm can be found by solving \cite{kim_on_the, xu_on_the, makki_fast}
\begin{equation}\label{key}
P_m = \left[ \frac{1}{\delta} - \frac{1}{h_m^2} \right]^+ ,
\end{equation}
where $ [z]^+ = \max\{0,z\} $ and $ \delta $ satisfies
\begin{equation}\label{key}
\frac{1}{M} \sum_{m=1}^{M} \left[ \frac{1}{\delta} - \frac{1}{h_m^2} \right]^+ = nP ,
\end{equation}
where $ M $ represents the total number of transmissions. Similar to waterfilling algorithm, here, we introduce the scaling coefficient $ \alpha $ such that
\begin{equation}\label{key}
\alpha = \frac{\varepsilon_m}{\sum_{i=1}^{\Phi} \mathbb{E}\left[ \gamma_i | \phi_i \leq \gamma_i \leq \phi_{i+1} \right]^2}
\end{equation}
and the CEP constraint on the $ i^{\text{th}} $ quantization region is
\begin{equation}\label{eq_cep_assignment}
\varepsilon_i = \frac{\alpha}{\mathbb{E}\left[ \gamma_i | \phi_i \leq \gamma_i \leq \phi_{i+1} \right]^2} .
\end{equation}
Thus, the CEP constraint on the $ i^{\text{th}} $ quantization region is inversely related with the truncated mean of the random variable $ \gamma $. With this method, it is also guaranteed that $ \varepsilon_i > 0 $ for all $ i $ values. An illustration of the proposed method is depicted in Fig. \ref{cep_assignment} for Rician channel with $ K=10 $ when $ \Phi = 4 $. Notice that the quantization regions are separated at $ \phi_i = \{ 0, 0.5, 1, 2, \infty \} $ and  the truncated mean values are depicted with respect to the quantization regions. Based on the truncated means, the assigned CEP values can be seen in Fig. \ref{fig_cep_assignment_ricain_K10} with respect to the quantization regions, where $ \epsilon_m = 10^{-5} $. 

Next, we introduce a block coordinate descent based iterative algorithm where for each step the variables are updated respectively. 

\subsection{Problem 2: Optimization of Transmission Rate}

Given a set of $ \phi_i $'s, one can estimate the required CEP constraints on every $ i^{\text{th}} $ quantization region. Next, we search for the optimal $ r_i $ that maximizes the goodput while still maintaining the CEP constraints. The optimization problem can be written as
	\begin{align}
	\underset{r_i}{\text{maximize}} ~~ r_i \int_{\phi_i}^{\phi_{i+1}} f(\gamma) \big(1-\Omega(\gamma,r_i)\big) \mathrm{d}\gamma ~~~ \text{s.t.} ~~ \eqref{eq_opt_prob_5_const_1},  \eqref{eq_opt_prob_1_cons_1}, 
	\label{eq_opt_prob_8_obj}
	\end{align}
where the maximization operation is done over a single $ r_i $ since the overall maximization of the goodput is equivalent to the maximization of every $ i^{\text{th}} $ goodput separately. 

A two-step straightforward solution to this problem is: (\emph{i}) Follow Theorem 1 and find $ r_i' $ that maximizes the objective function without any constraint on CEP. (\emph{ii}) Check if $ r_i' $ satisfies the CEP constraint on the $ i^{\text{th}} $ quantization region. If it does not, gradually decrease $ r_i' $ until the  constraint is met. 

\begin{remark}
	If the CEP constraint cannot be achieved with an $ r_i > 0 $, one can only select $ r_i = 0 $ which yields the CEP of the $ i $th quantization region to be
	\begin{equation}\label{eq_cep_when_r_0}
	F(\phi_{i+1}) - F(\phi_{i}) .
	\end{equation}
	In this case, if $F(\phi_{i+1}) - F(\phi_{i}) > \varepsilon_i$,	no feasible solution can be achieved with this particular selections of $ \varepsilon_i $'s. A possible solution is to update $ \phi_i $'s and $ \varepsilon_i $'s accordingly.
\end{remark}

\subsection{ Problem 3: Optimization of Quantization Regions}

Next, we search for the optimal $ \phi_i $'s with the following optimization problem
\begin{subequations} 
	\begin{align} 
	\underset{\phi_i}{\text{maximize}}& ~~ \sum_{i=1}^{\Phi} r_i \int_{\phi_i}^{\phi_{i+1}} f(\gamma) \big(1-\Omega(\gamma,r_i)\big) \mathrm{d}\gamma  
    \\
	\text{s.t.}& ~~ \eqref{eq_opt_prob_5_const_1}, \eqref{eq_opt_prob_1_cons_2}.
	\end{align}
	\label{eq_opt_prob_9_obj}
\end{subequations}
This optimization problem belongs to the class of non-convex optimization problem, since the objective and the constraint in \eqref{eq_opt_prob_9_obj} are non-convex in $ \phi_i $. Although the Lagrangian duality theory has been proven to be an effective tool for solving the convex optimization problems with the primal or primal-dual computational methods, these methods are not powerful in a non-convex problem due to the non-zero duality gap between the primal and dual problems \cite{lavaei_zero}. In order to reduce the duality gap for non-convex problems the augmented Lagrangian method is introduced \cite{birgin_practical}. In this method, the duality gap is reduced by adding a penalized function, which consists of a quadratic term, to the Lagrangian.

The Lagrangian of the optimization problem in \eqref{eq_opt_prob_9_obj} can be defined by
\begin{multline}\label{key}
\mathcal{L} = \sum_{i=1}^\Phi r_i \left( F(\phi_{i+1}) - F(\phi_i) - \int_{\phi_i}^{\phi_{i+1}} f(\gamma) \Omega (\gamma , r_i ) \right) 
\\
- \sum_{i=1}^\Phi \lambda_i \left(  \int_{\phi_i}^{\phi_{i+1}}  f(\gamma) \Omega(\gamma,r_i) \mathrm{d}\gamma - \varepsilon_i  \right) 
\\
-\sum_{i=1}^\Phi \mu_i \left( \phi_i - \phi_{i+1} \right) ,
\end{multline}
for all $ \lambda_i \in \mathbb{R}^+ $ and $ \mu_i \in \mathbb{R}^+ $. The augmented Lagrangian method  eliminates the constraints and adds them into the objective function. Thus, the augmented Lagrangian function can be written as
\whencolumns{
	\begin{multline}\label{key}
	\mathcal{L}^a = \sum_{i=1}^\Phi r_i \left( F(\phi_{i+1}) - F(\phi_i) - \int_{\phi_i}^{\phi_{i+1}} f(\gamma) \Omega (\gamma , r_i ) \right) 
	\\
	-\frac{\rho}{2} \Bigg( \sum_{i=1}^\Phi \Bigg( \bigg[ \beta \left( \int_{\phi_i}^{\phi_{i+1}}  f(\gamma) \Omega(\gamma,r_i) \mathrm{d}\gamma - \varepsilon_i \right) 
	-\frac{\lambda_i}{\rho} \bigg]^+ \Bigg)^2 -\sum_{i=1}^\Phi \left( \left[ \phi_i - \phi_{i+1} - \frac{\mu_i}{\rho} \right]^+ \right)^2  \Bigg) ,
	\end{multline}
}{
	\begin{multline}\label{key}
	\mathcal{L}^a = \sum_{i=1}^\Phi r_i \left( F(\phi_{i+1}) - F(\phi_i) - \int_{\phi_i}^{\phi_{i+1}} f(\gamma) \Omega (\gamma , r_i ) \right) 
	\\
	-\frac{\rho}{2} \Bigg( \sum_{i=1}^\Phi \Bigg( \bigg[ \beta \left( \int_{\phi_i}^{\phi_{i+1}}  f(\gamma) \Omega(\gamma,r_i) \mathrm{d}\gamma - \varepsilon_i \right) 
	\\
	-\frac{\lambda_i}{\rho} \bigg]^+ \Bigg)^2 -\sum_{i=1}^\Phi \left( \left[ \phi_i - \phi_{i+1} - \frac{\mu_i}{\rho} \right]^+ \right)^2  \Bigg) ,
	\end{multline}
}
where $ \rho > 0 $ is the penalty parameter and $ \beta > 0 $ is the normalization factor since the the CEP values can be very small compared to the other penalty terms. $ \lambda_i $ and $ \mu_i $ are the Lagrangian dual variables associated with the constraints in  \eqref{eq_opt_prob_9_obj}. It is shown in \cite{bertsekas_nonlinear} that the augmented Lagrangian $ \mathcal{L}^a $ is locally convex when the penalty terms are sufficiently large. 

\begin{algorithm}[t]
    \small
	\caption{Goodput maximization with constraint on maximum CEP}
	\begin{algorithmic}[1]
		\State Initialize $ \phi_i $ for $ i=1,2,\cdots, \Phi+1 $  
		\State Set $ \varepsilon_i $ for $ i=1,2,\cdots, \Phi $ accordingly
		\Repeat 
		\For{$ i=1,2,\cdots, \Phi $} \label{marker}
		\State Update $ \varepsilon_i $ according to \eqref{eq_cep_assignment} 
		\EndFor
		\For{$ i=1,2,\cdots, \Phi $} 
		\State Solve \eqref{eq_theorem_1} for $ r_i $
		\If{$ \epsilon_i > \varepsilon_i $}
		\Repeat 
		\State $ r_i = r_i-\tau $
		\If {$ r_i == 0 $}
		\State Update $ \phi_i $ and $ \phi_{i+1} $, and \Goto{marker}
		\EndIf
		\Until {$ \epsilon_i \leq \varepsilon_i $}
		\EndIf
		\EndFor
		\For{$ i=\Phi,\Phi-1,\cdots, 2 $} 
		\State Set $ \lambda_i^{(1)}, \mu_i^{(1)} $, and $ \rho^{(1)} $
		\Repeat
		\State Solve \eqref{eq_augmented_lag_maximizer} for $ \phi_i $
		\State Update $ \lambda_i^{(l+1)}, \mu_i^{(l+1)} $, and $ \rho^{(l+1)} $ according to \eqref{eq_augmented_lag_update_lambda}, \eqref{eq_augmented_lag_update_mu}, and \eqref{eq_augmented_lag_update_rho}
		\Until \textit{Convergence}
		\EndFor
		\Until \textit{Convergence}
	\end{algorithmic} 
\end{algorithm}

The solution to the problem can now be approximated by solving
\begin{equation}\label{eq_augmented_lag_maximizer}
\underset{\phi_i}{\text{maximize}} ~~ \mathcal{L}^a .
\end{equation} 
  The essential feature of this method is that after each unconstrained maximization, which yields a maximization point $ \phi_i $, the variables $ \rho $,  $\lambda_i, $ and $ \mu_i $ are updated by means of the iteration. Thus, the Lagrangian dual variables are updated according to
\begin{equation}\label{eq_augmented_lag_update_lambda}
\lambda_i^{(l+1)} = \Bigg( \bigg[ \rho^{(l)} \beta \bigg( \int_{\phi_i}^{\phi_{i+1}}  f(\gamma) \Omega(\gamma,r_i) \mathrm{d}\gamma - \varepsilon_i \bigg) - \lambda_i^{(l)} \bigg]^+ \Bigg)^2,
\end{equation}
\begin{equation} \label{eq_augmented_lag_update_mu}
\mu_i^{(l+1)} = \left( \left[ \rho^{(l)}(\phi_i - \phi_{i+1}) - \mu_i^{(l)} \right]^+ \right)^2
\end{equation} 
and the penalty parameter $ \rho $ is updated as 
\begin{equation}\label{eq_augmented_lag_update_rho}
\rho^{(l+1)} = 2\rho^{(l)} ,
\end{equation}
where the upper-script in $ z^{(l)} $ represents the variable $ z $ at the $ l $th iteration step.\footnote{A discussion on the initialization of the Lagrangian dual variables can be found in \cite[Ch. 4]{birgin_practical}.} Notice that, with the augmented Lagrangian method, constraint violations are increasingly penalized as the penalty parameter $ \rho $ increases in every iteration step. This imposes the maximizer of Eq. \eqref{eq_augmented_lag_maximizer} to be in the feasible region. 

The proposed method is described in Algorithm 2. It can be seen that at each iteration, as $ \phi_i $'s and $ r_i $'s  change, $ \varepsilon_i $ values are also updated accordingly. The algorithm terminates as convergence is achieved. Otherwise, no feasible solution is found. Notably, the optimal solution is achieved by using some well-known optimization methods, which are the block coordinate descent type method and the augmented Lagrangian method, and the convergence of these methods are well documented in the literature, \cite{birgin_practical} and \cite{beck_on_the}.



\section{Numerical Results and Discussions}

\begin{figure*}[t]
	\centering
	\begin{subfigure}{.46\textwidth}
		\centering
		\includegraphics[width=1\linewidth]{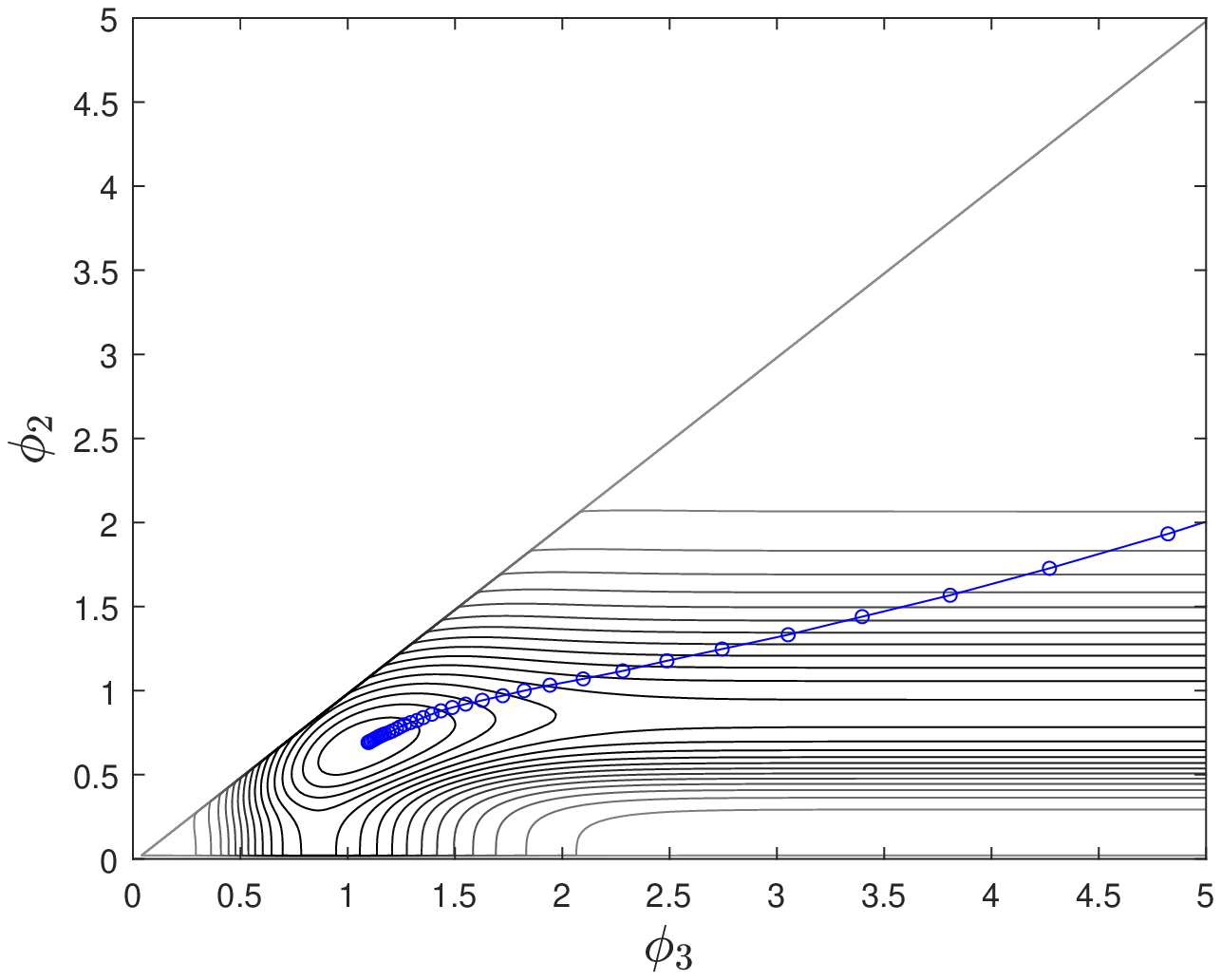}
		\caption{}
		\label{fig_contour_rician_K10_Q3}
	\end{subfigure}  ~~
	\begin{subfigure}{.5\textwidth}
		\centering
		\includegraphics[width=1\linewidth]{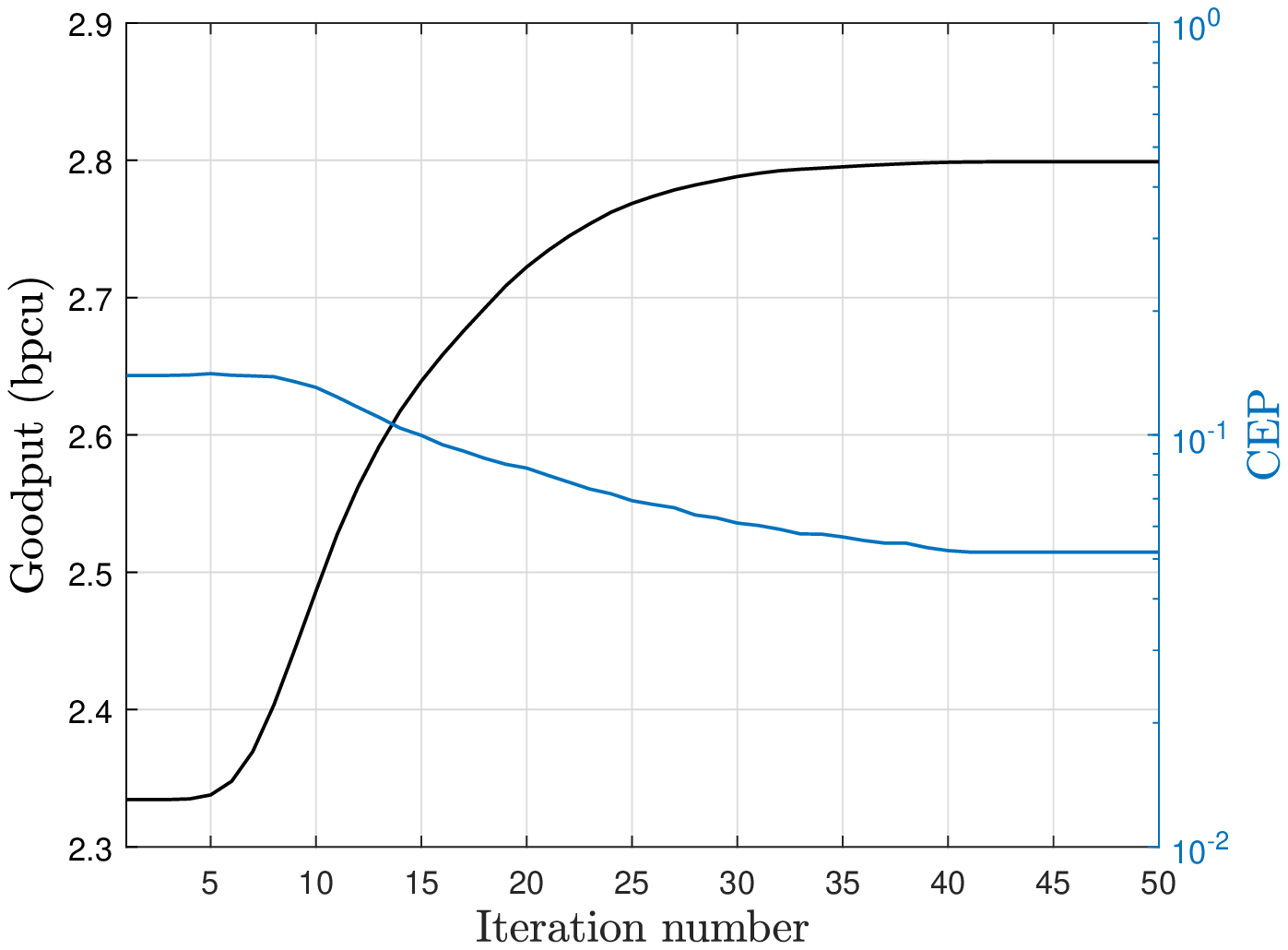}
		\caption{}
		\label{fig_iter_vs_exp_rate_and_cep}
	\end{subfigure}
	\caption{The behavior of the algorithm for the feedback scheme with $ \Phi = 3 $ and $ h $ is Rician distributed with $ K=10 $. (a) The contour of the goodput for all possible selections of $ \phi_2, \phi_3 \in [0, 5] $. The path of updated $ \phi_2 $ and $ \phi_3 $ in every subsequent iteration until convergence is shown with the blue line with circles. (b) (LHS) Achievable goodput and (RHS) the corresponding CEP of the system versus iteration number are depicted.}
	\label{fig_alg_1_behavior}
\end{figure*}


\begin{figure*}[t]
	\centering
	\begin{subfigure}{.5\textwidth}
		\centering
		\includegraphics[width=1\linewidth]{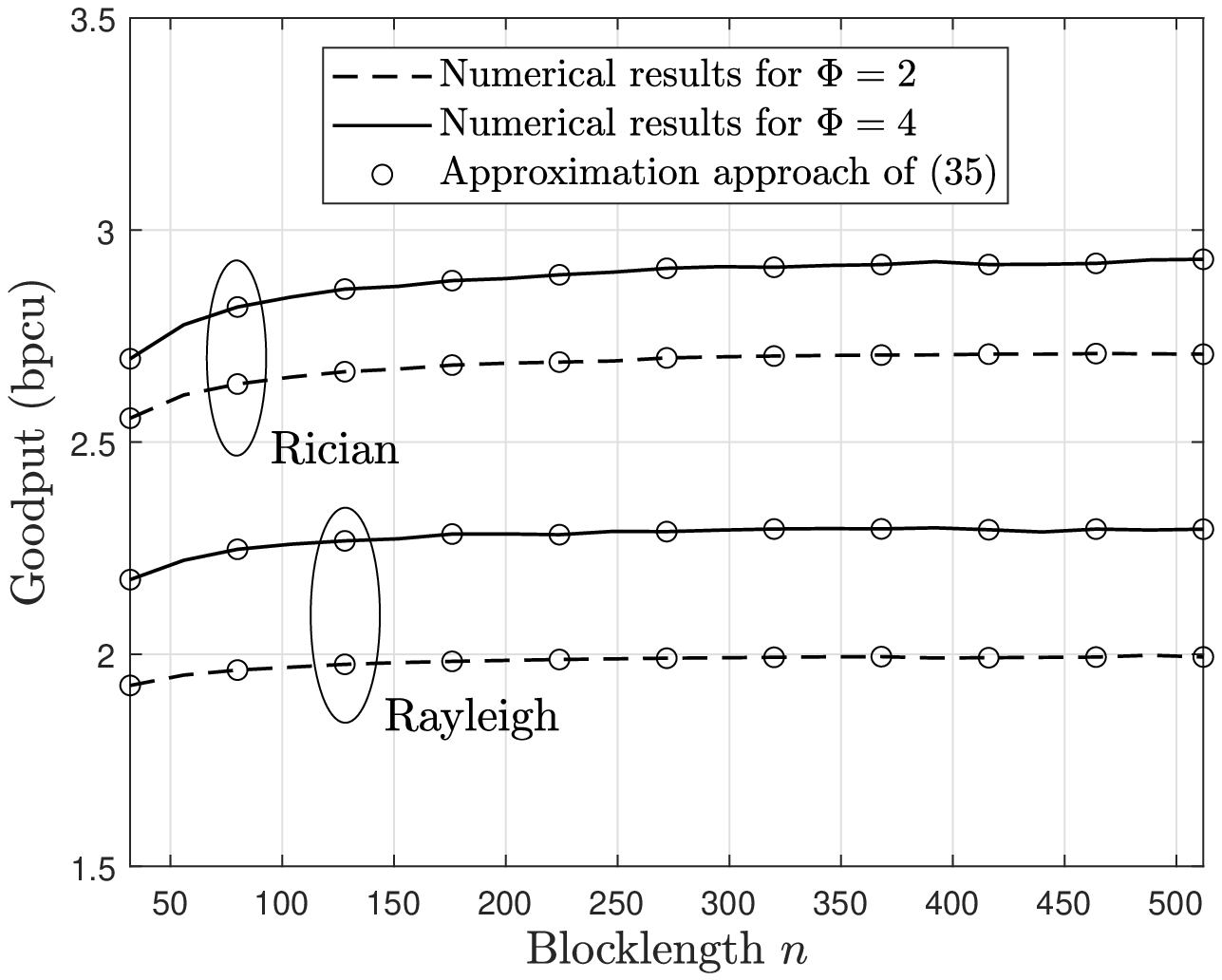}
		\caption{}
		\label{fig_comparison_of_approximation}
	\end{subfigure}%
	\begin{subfigure}{.5\textwidth}
		\centering
		\includegraphics[width=1\linewidth]{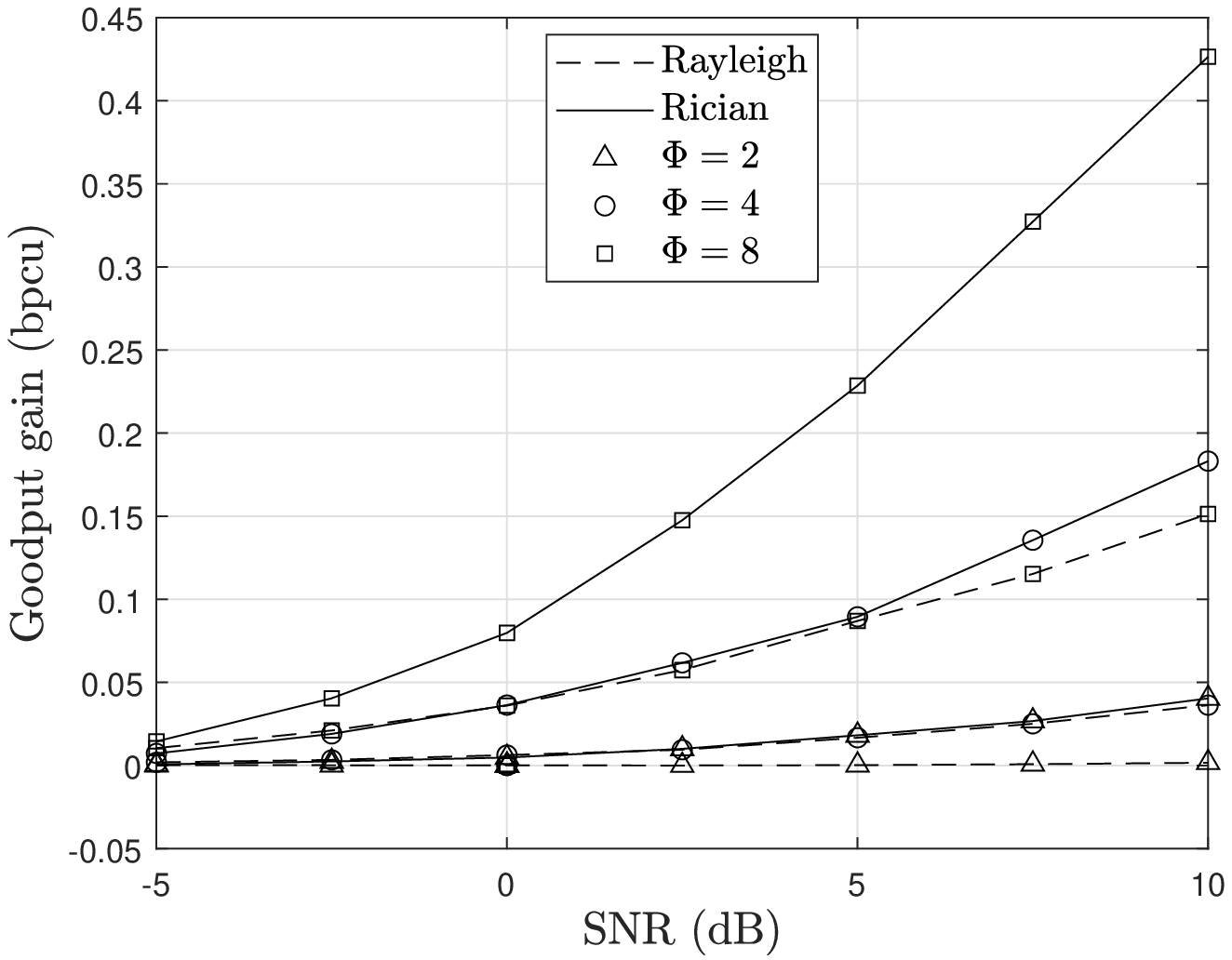}
		\caption{}
		\label{fig_benchmark_comparison}
	\end{subfigure}
	\caption[]{(a) Tightness of the proposed finite blocklength approximation in Theorem 2. Numerical comparisons for Rayleigh and Rician channels are depicted for $P = 10$ dB and $K = 10$. (b) Comparison of the proposed method with the benchmark. The goodput gain represents the difference between the achievable goodput values of the proposed method and the benchmark in bpcu. Results are computed for $n=64$, $P = 10$ dB and $K = 10$.}
\end{figure*}

The behavior of the Algorithm 1 for a simple case where $ \Phi = 3 $ is shown in Fig. \ref{fig_alg_1_behavior}, where the horizontal axes represent the quantization variables $ \phi_2 $ and $ \phi_3 $ and the vertical axis represents the goodput.\footnote{Notice that for $ \Phi = 3 $ only two quantization variables, which are $ \phi_2 $ and $ \phi_3 $, need to be optimized.} The contour surface in Fig. \ref{fig_contour_rician_K10_Q3} depicts the maximum achievable goodput according to the selections of $ \phi_2 $ and $ \phi_3 $. The solid line with circles shows the path of updated $ \phi_2 $ and $ \phi_3 $ in every subsequent iteration until the algorithm terminates.  Notice that no values are shown in the upper left half of Fig. \ref{fig_contour_rician_K10_Q3} since the numerical selections of $ \phi_2 $ and $ \phi_3 $ in this region violate the constraint in Eq. \eqref{eq_opt_prob_1_cons_2} and therefore no feasible result can be achieved. In this example, the initial values of $ \phi_2 $ and $ \phi_3 $ are selected to be $ \phi_2 = 5 $ and $ \phi_3 = 10 $. The channel coefficient $ h $ is selected to be distributed according to the Rician distribution with $ K=10 $. Selections of $ \phi_2 $ and $ \phi_3 $ in every iteration are shown with a circle in Fig. \ref{fig_contour_rician_K10_Q3}. As can be seen, the algorithm can efficiently track the path to the optimum and reach it.

\begin{figure*}[t]
	\centering
	\begin{subfigure}{.5\textwidth}
		\centering
		\includegraphics[width=1\linewidth]{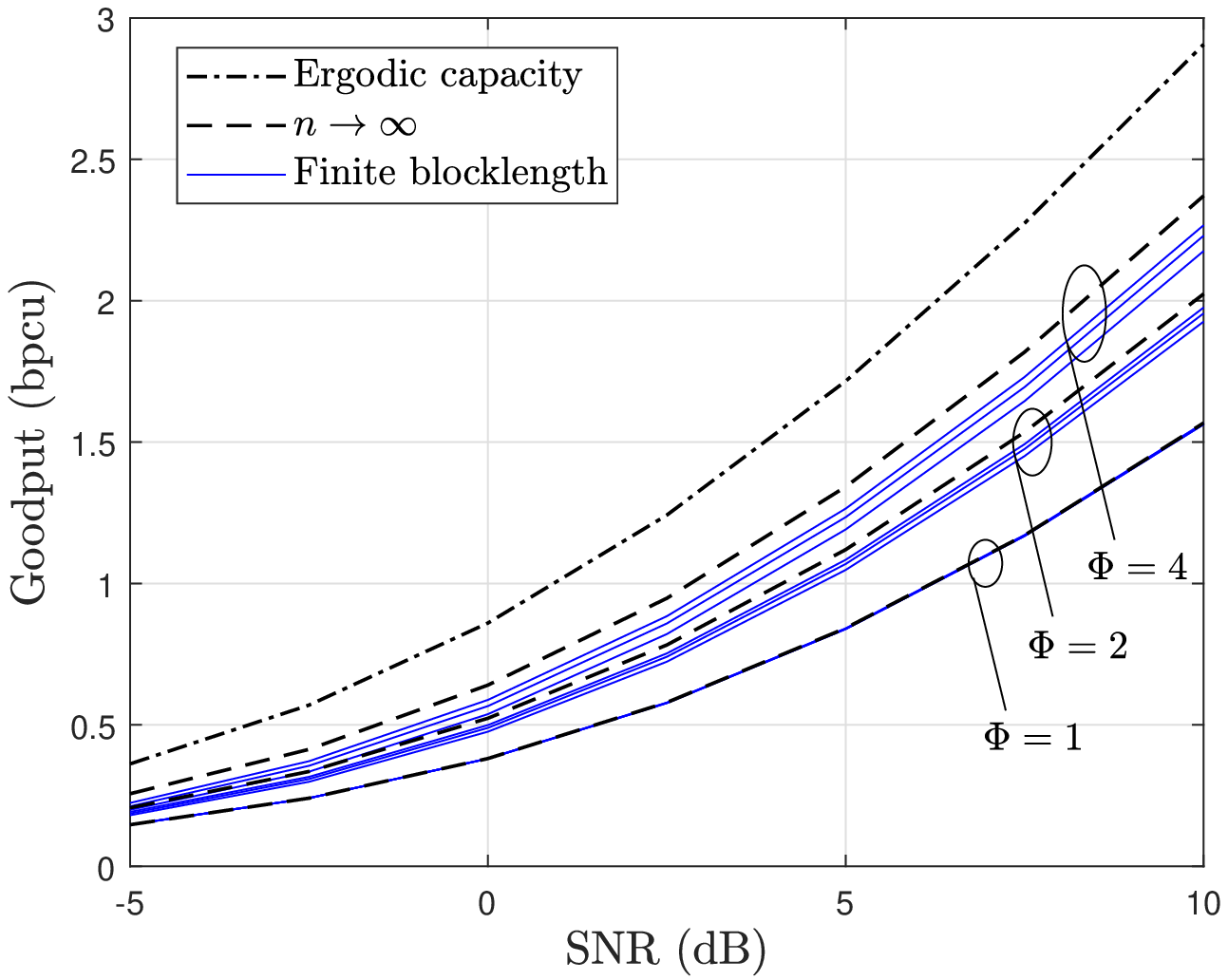}
		\caption{}
		\label{fig_gen_results_on_exp_throughput_rayleigh}
	\end{subfigure}%
	\begin{subfigure}{.5\textwidth}
		\centering
		\includegraphics[width=1\linewidth]{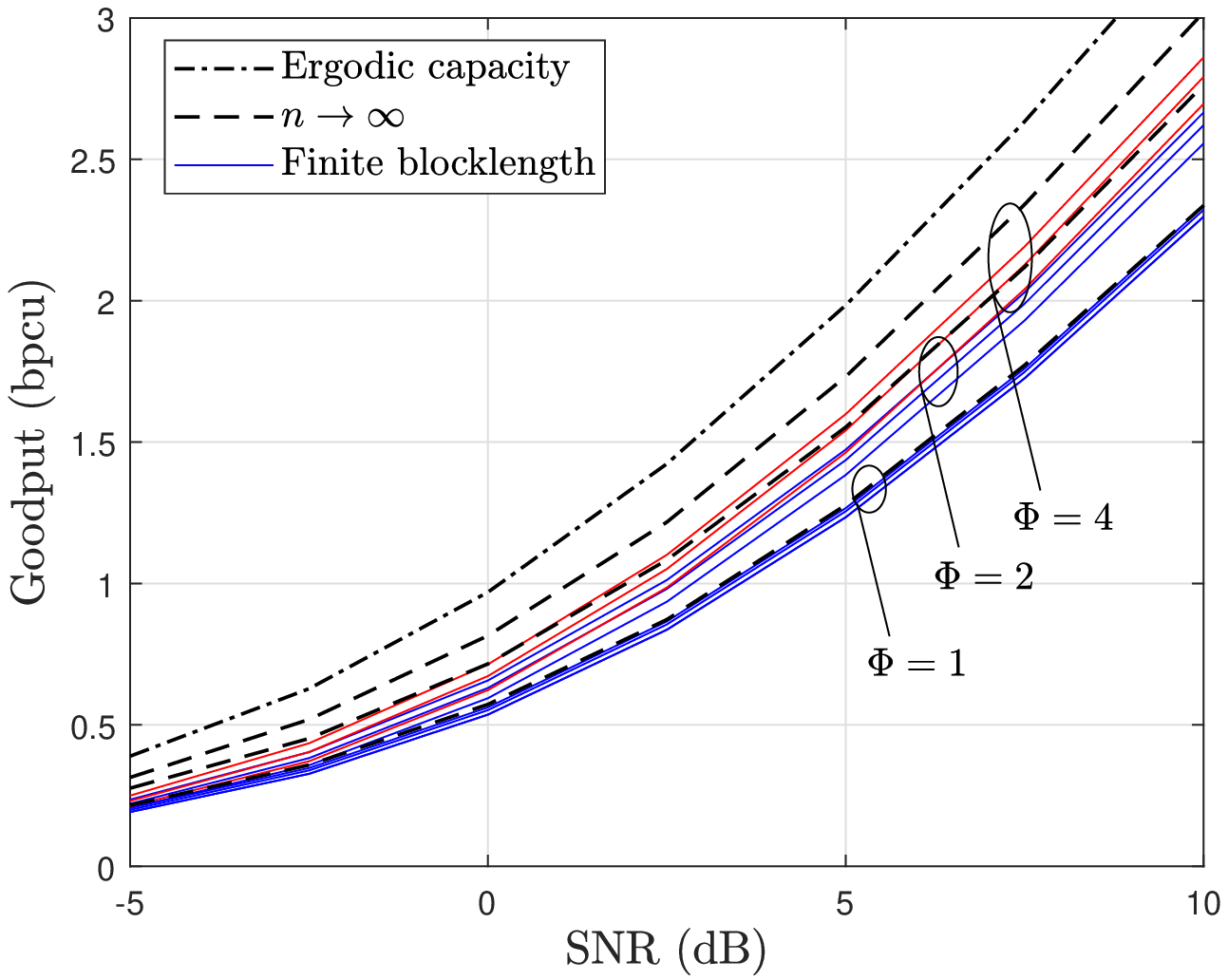}
		\caption{}
		\label{fig_gen_results_on_exp_throughput_rician_K_10}
	\end{subfigure}
	\caption[]{Maximum goodput achieved without any CEP constraint over Rayleigh and Rician channels with different feedback schemes, $ \Phi = \{1, 2, 4\} $. Values for finite $ n $ are depicted with blue solid lines, where we show results for $ n=\{32,64,128\} $. Notice that, for all cases, goodput values for $ n=128 $ are the highest ones and $ n=32 $ are the lowest. For comparison, the ergodic capacity and achievable goodput in asymptotic regime are also shown with dash-dotted and dashed lines, respectively. (a) Rayleigh channel. (b) Rician channel with $ K=10 $ dB.\footnotemark }
	\label{fig_max_exp_through_for_rayleigh_rician}
\end{figure*}
\footnotetext{ We used the red color to show the maximum achievable goodput values for $ \Phi = 4 $ in Fig. \ref{fig_gen_results_on_exp_throughput_rician_K_10} for finite blocklength results since they are mixed with the results of $ \Phi = 2 $. }

The corresponding goodput values after each iteration are also shown in Fig. \ref{fig_iter_vs_exp_rate_and_cep}, where one can see that the algorithm is able to find better selections of $ \phi_i $'s and $ r_i $'s in every iteration such that they lead to higher goodput and finally reach the optimal values. Moreover, we also show the overall CEP value of the system with the corresponding feedback scheme with the blue solid line in Fig. \ref{fig_iter_vs_exp_rate_and_cep}. It can be seen that as the goodput increases, the overall CEP decreases and reaches to a steady state after 50 iterations. 

The tightness of the approximation that is presented in Theorem 2 is shown in Fig. \ref{fig_comparison_of_approximation} for different Rayleigh and Rician channel conditions with different blocklength values. One can identify that Eq. (35) provides accurate approximation for the $Q$-function and the results of the algorithm are tight.  


It is also significant to compare the proposed method with proposed scheme with a benchmark scheme to show the superiority of the proposed method. Here, we derive the benchmark scheme by setting $\phi_i$'s according to the solution of the asymptotic regime that is presented in Eq. (11) and (12) but transmitting short blocklength at rate $R(n, \phi_i, \epsilon)$, from Eq. (13), when the channel gain is $\gamma \in [\phi_i, \phi_{i+1}]$. Goodput gain, which is the difference between the goodput value calculated with the proposed method and the benchmark, for Rayleigh and Rician channels are depicted in Fig. \ref{fig_benchmark_comparison} for $n=64$ and $\Phi=\{2,4,8\}$. It is important to observe that the gain is positive for all cases which means that the proposed method superior compared to the benchmark. Additionally, one can also see that the gain increases with higher SNR and higher quantization regions. And the gain is also higher in Rician channels compared to Rayleigh channel at the same SNR and $\Phi$, which is an expected result due to the effect of the channel dispersion.

Next, we show the maximum goodput values achieved by several feedback schemes with different numbers of quantization regions and blocklengths in Fig. \ref{fig_max_exp_through_for_rayleigh_rician}. Results for Rayleigh and Rician channels are depicted in Fig. \ref{fig_gen_results_on_exp_throughput_rayleigh} and Fig.  \ref{fig_gen_results_on_exp_throughput_rician_K_10}, respectively, for $ \Phi = \{1, 2, 4\} $ and $ n=\{32,64,128\} $. For comparison purposes, the ergodic capacity and achievable goodput values in the asymptotic regime \cite{kim_on_the} are also plotted. 

\begin{figure}[t]
	\centering
	\whencolumns{
		\includegraphics[width=.6\linewidth]{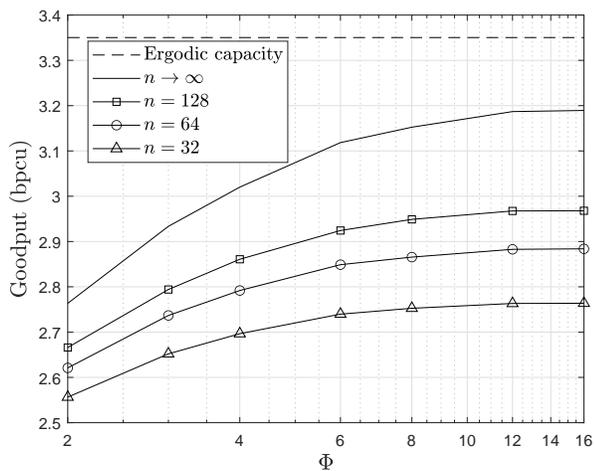}
	}{
		\includegraphics[width=1\linewidth]{goodput_w_capital_phi.eps}
	}
	\caption{The effect of $\Phi$ on the goodput. Results are computed for Rician channel with $K = 10$ and $P = 10$ dB.}
	\label{fig_goodput_w_capital_phi}
\end{figure}

\begin{figure*}[t]
	\centering
	\begin{subfigure}{.5\textwidth}
		\centering
		\includegraphics[width=1\linewidth]{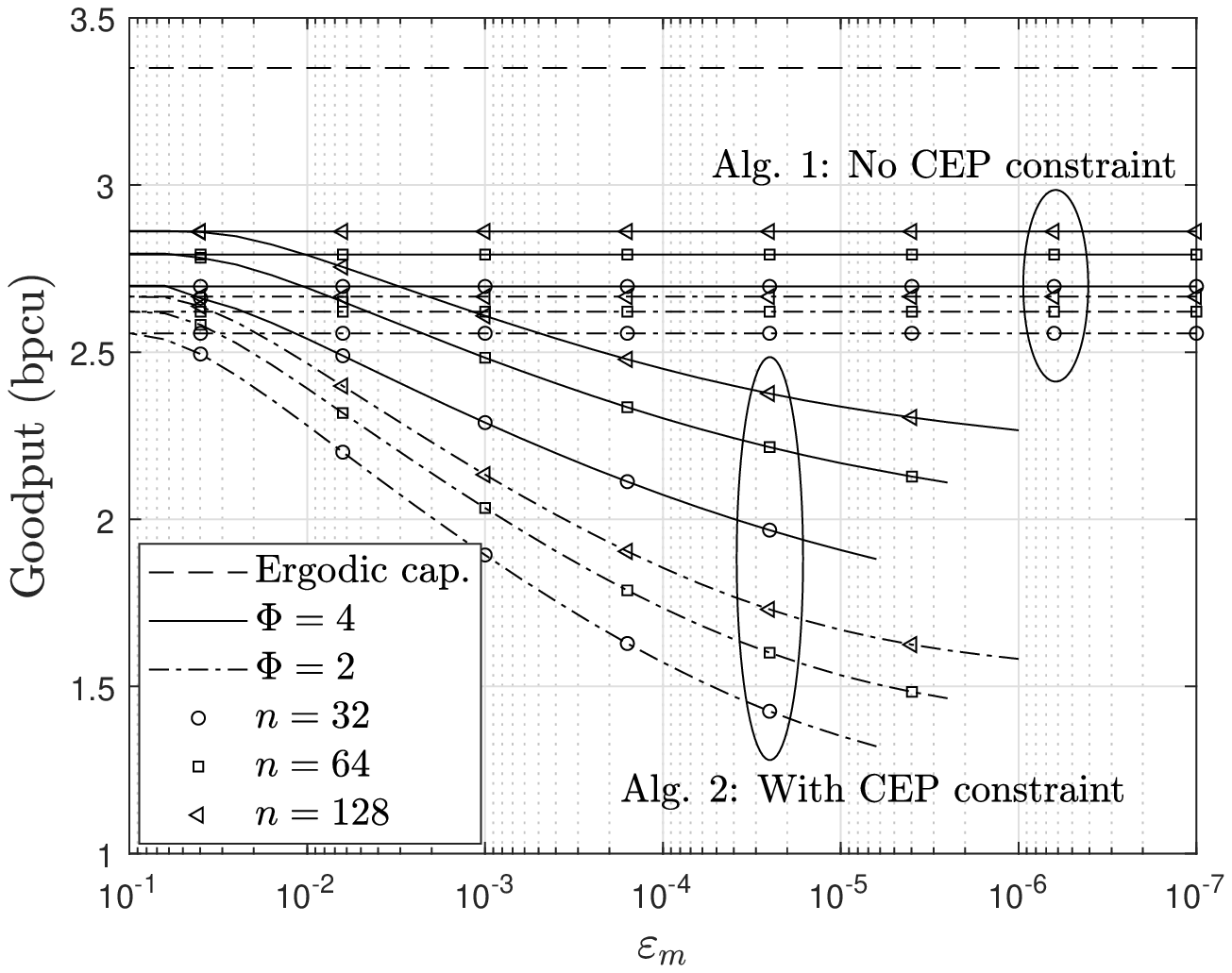}
		\caption{}
		\label{alg_1_vs_alg_2_rician_K_10_P_10}
	\end{subfigure}%
	\begin{subfigure}{.5\textwidth}
		\centering
		\includegraphics[width=1\linewidth]{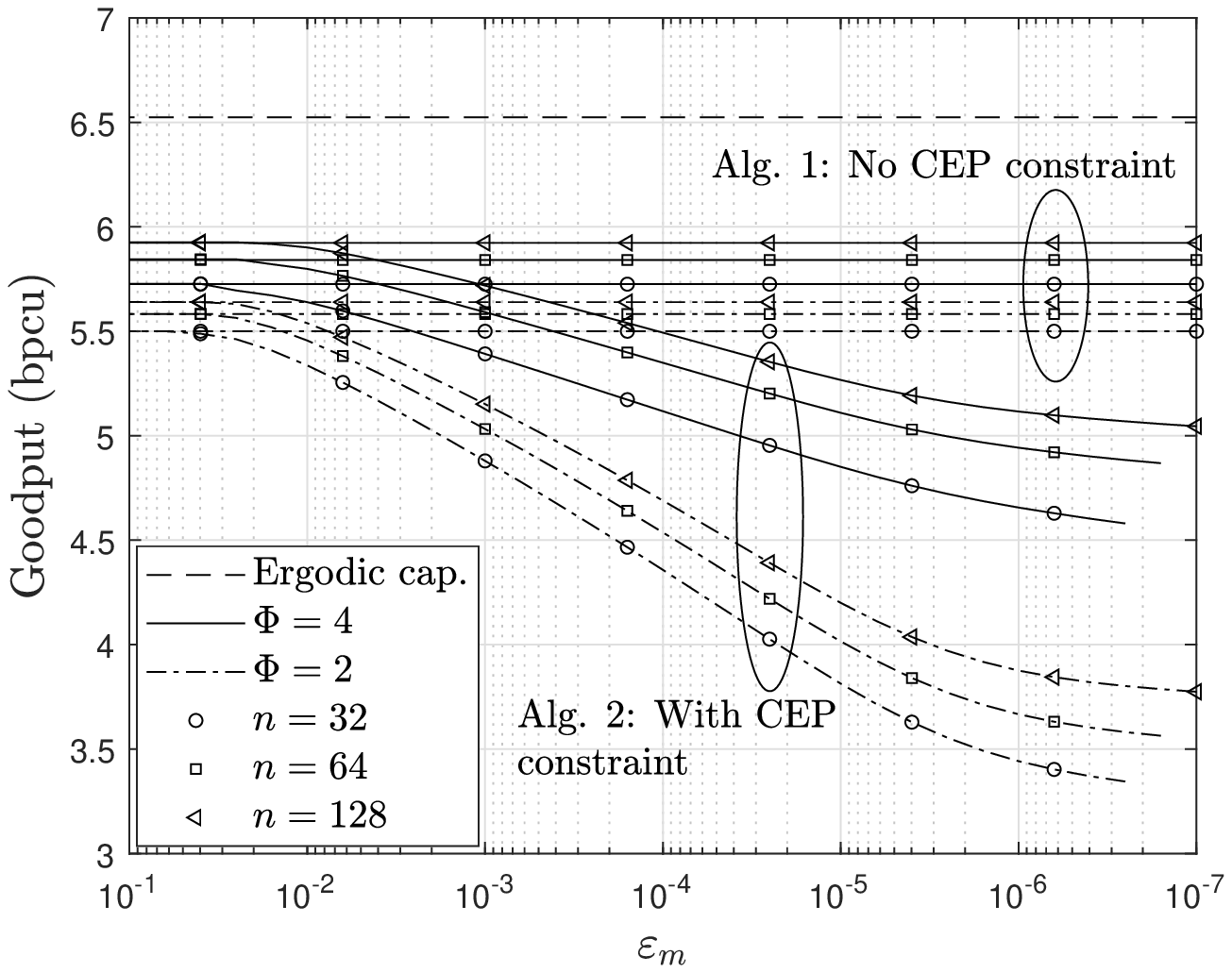}
		\caption{}
		\label{alg_1_vs_alg_2_rician_K_10_P_100}
	\end{subfigure}
	\caption{Maximum goodput values with and without constraint on CEP over Rician channel with $ K=10 $ dB and (a) $ P = 10 $ dB, (b) $ P=20 $ dB. Results for different feedback schemes with $ \Phi = \{2,4\} $ and $ n=\{32, 64, 128\} $ are depicted with respect to $ \varepsilon_m $, ranging from $ 10^{-1} $ to $ 10^{-7} $.}
	\label{alg_1_vs_alg_2_rician}
\end{figure*}

Notice that $ \Phi=1 $ represents the case where no CSI at the transmitter is available. Thus, it is important to observe that even with coarsely quantized systems, significant gains for both channel distributions are observed in both finite and infinite blocklength regimes. For example, in infinite blocklength regime, to achieve a target goodput of 1 bits per channel use (bpcu), feedback schemes with $ \Phi = 2 $ and $ \Phi = 4 $ require a power of roughly 2 and 3 dB less than a $ \Phi = 1 $ system does, respectively. Similar advantages are also observed for finite blocklength cases. It can be seen that with the quantized feedback scheme, the goodput of the communication system can be significantly increased. As expected, for all cases, i.e. fixed SNR and $ \Phi $, goodput with infinite blocklength are higher than finite blocklength. Thus, longer blocklength results in higher goodput. Except the case with Rayleigh distribution and $ \Phi = 1 $. In this case, differences among all schemes are negligible. This is due to the fact that the dispersion of the channel is zero, which is explained in detail in Sec. V. Therefore, the blocklength of the channel code does not have any impact on the achievable rate.  Furthermore, one can also observe that the gap between the finite and infinite blocklength cases is higher in Rician channel compared to Rayleigh. This is because channel dispersion is higher for Rician with high $ K $ and therefore blocklength has higher impact on the achievable rate. For instance, the goodput gap between infinite blocklength and $ n=128 $ for Rayleigh channel with $ \Phi=4 $ at $ 10 $ dB SNR is $ 0.11 $ bpcu. However, this gap increases to $ 0.16 $ bpcu for Rician channel.

An interesting and relevant investigation would be on the effect of the number of quantization regions, $\Phi$, on the goodput performance. This comparison is shown in Fig. \ref{fig_goodput_w_capital_phi}. Here, the achievable goodput values for Rician channel with $K=10$ and $P=10$ dB are depicted for various $n$ and feedback schemes with different $\Phi$ values. For comparison purposes the ergodic capacity, which does not depend on $\Phi$, and asymptotic results that are achieved with (11) and (12) are also added. The important outcome from Fig. \ref{fig_goodput_w_capital_phi} is that the goodput significantly increases as $\Phi$ increases when $\Phi$ is small, whereas this increment is not significant as $\Phi$ is big. Thus, based on the outcome of this scenario, one can say that the optimal $\Phi$ is 16, after which the performance does not significantly improve. 

\begin{figure}[t]
	\centering
	\whencolumns{
		\includegraphics[width=.6\linewidth]{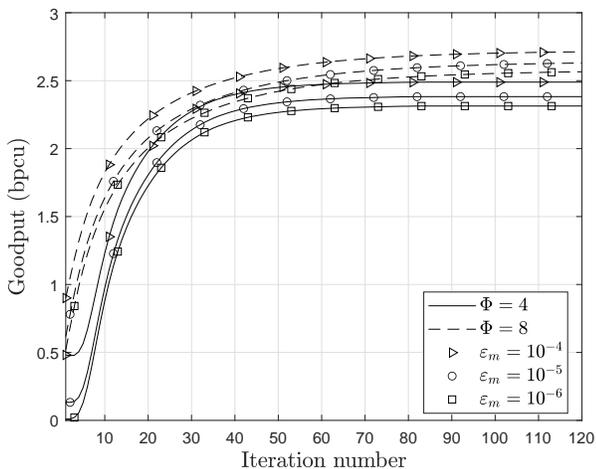}
	}{
		\includegraphics[width=1\linewidth]{convergence_alg_2.eps}
	}
	\caption{Convergence of Algorithm 2.}
	\label{alg_2_convergence}
\end{figure}

\begin{table}[b]
\centering
\begin{tabular}{|ccc|ccc|}
\hline
\multicolumn{3}{|c|}{$\Phi = 4$}  & \multicolumn{3}{c|}{$\Phi = 8$}  \\ \hline
\multicolumn{1}{|c|}{$\varepsilon = 10^{-4}$} & \multicolumn{1}{c|}{$\varepsilon = 10^{-5}$} & $\varepsilon = 10^{-6}$ & \multicolumn{1}{c|}{$\varepsilon = 10^{-4}$} & \multicolumn{1}{c|}{$\varepsilon = 10^{-5}$} & $\varepsilon = 10^{-6}$ \\ \hline
\multicolumn{1}{|c|}{0.0019}                  & \multicolumn{1}{c|}{0.0021}                  & 0.0103                  & \multicolumn{1}{c|}{0.0037}                  & \multicolumn{1}{c|}{0.0084}                  & 0.0213                  \\ \hline
\end{tabular}
\caption{Optimality gap between the proposed algorithm and the optimal solutions.}
	\label{table_opt_gap}
\end{table}

Up to now we showed the empirical results without any constraint on CEP. Next, we focus on the second optimization problem where restriction on CEP is imposed. In this case, we particularly focus on Rician channel with $ K =10 $ dB with $P = \{10, 20\} $ dB SNR. Maximum goodput values achieved by different schemes with CEP constraint are plotted in Fig. \ref{alg_1_vs_alg_2_rician} for $ n=\{32,64,128\} $ where the horizontal axis represents the maximum allowed CEP value $ \varepsilon_m \in [ 10^{-1}, 10^{-7} ]$. Algorithm 2 is used to solve the optimization problem with CEP constraint. Additionally, the ergodic capacity and achievable goodput values with no CEP constraint, which are attained by Algorithm 1, are also plotted for comparison purposes. Since these values do not change with $ \varepsilon_m $, they all have the same value for different $ \varepsilon_m $ and are represented with a straight horizontal line in Fig. \ref{alg_1_vs_alg_2_rician}.

Notice that, in Fig. \ref{alg_1_vs_alg_2_rician}, significant goodput gaps between the achievable values with and without CEP constraint are observed. This gap increases as $ \varepsilon_m $ decreases. For instance, to achieve $ \epsilon_m = 10^{-3} $ with $ n=128 $ when $ P=10 $ dB, it is required to introduce $ 0.53 $ and $ 0.26 $ bpcu of rate backoffs when $ \Phi = 2 $ and $ \Phi = 4 $, respectively. However, if $ \epsilon_m = 10^{-5} $, these gaps become $ 1 $ and $ 0.53 $ bpcu for $ \Phi = 2 $ and $ \Phi = 4 $, respectively. Thus, it is shown that imposing a restriction on CEP reduces the maximum goodput of the system. In other words, the feedback scheme achieves the required reliability by decreasing the transmission rate. Notably these gaps are even higher when $ P=20 $ dB. Furthermore, it can also be seen that for very low reliability values, i.e. $ \epsilon_m < 10^{-2} $, goodput results with CEP constraint converge to the results achieved without a CEP constraint, since the overall CEP that is achieved with Algorithm 1 is already lower than $ \varepsilon_m $. 

Recall that Algorithm 2 can only find a feedback scheme when a feasible set that fulfills the constraints is achieved, otherwise no feasible set can be constructed. This problem, for instance, arises for $ n=128 $ and $ P = 10 $ dB when $ \varepsilon_m < 10^{-6} $. In this case, no results are shown. Similarly, when $ n=64 $ and $ n=32 $, no solution is achieved when $ \varepsilon_m < 2.5\times 10^{-6} $ and $ \varepsilon_m < 6\times 10^{-6} $, respectively. One possible solution to this problem is to increase the SNR of the received codeword. Thus, as seen in Fig. \ref{alg_1_vs_alg_2_rician_K_10_P_100}, it is possible to achieve a feedback scheme for $ n=128 $ and $ \varepsilon_m = 10^{-7} $ when $ P = 20 $ dB. Similar results are observed for $ n=64 $ and $ n=32 $.

We finally discuss about the convergence of Algorithm 2. Goodput values achieved at every iteration are shown in Fig.  \ref{alg_2_convergence} for $\Phi = \{4, 8\}$, $n = 128$, and Rician fading channel with $K=10$, $P = 10$ dB under various reliability constraints. It is observed that the proposed algorithm converges and a steady state is achieved after 100 iterations in all cases. In addition, to see the optimality gap between the optimal solution and the scheme found by Algorithm 2, we performed an exhaustive grid search and list the gaps in Table \ref{table_opt_gap}.

\section{Conclusions}

Goodput maximization in the finite blocklength regime for a communication scenario with partial CSI knowledge at the transmitter over quasi-static channel is studied in this paper. An adaptive quantized feedback scheme that exploits the maximum goodput by searching the optimum selections of quantization regions with their corresponding transmission rates is formulated and investigated. This problem is then analytically solved and an iterative algorithm is proposed. Next, optimal quantized feedback scheme design for goodput maximization is studied under reliability constraint. A sub-optimum CEP allocation technique is presented. Then, the problem is analytically solved with the augmented Lagrangian method and an iterative algorithm that can achieve the maximum goodput while guaranteeing the reliability constraint is proposed. Lastly, numerical results of the optimization problem with and without constraint on CEP are presented. It is shown that significant improvement in goodput can be achieved even with coarsely quantized feedback schemes. However, differences between the achievable goodput results in asymptotic and non-asymptotic regimes are observed. It is shown that these differences vary with different channel distributions. Furthermore, we noticed that when constraint on reliability is imposed, the feedback scheme achieves the required reliability by decreasing the transmission rate and therefore maximum achievable goodput decreases with higher reliability constraint. Finally, it is worth mentioning that results that are presented in this paper are valid for any continuous channel distributions since the optimization problems are explicitly formulated without any specific channel distribution.

\bibliographystyle{IEEEtran}
\bibliography{FB_w_feedback}


\end{document}